\patchcmd{\SetProgSty}{ArgSty}{ProgSty}{}{}
\def\op#1{\ensuremath{\mathsf{#1}}}
\def\Sdec{S_{\mathrm{dec}}}
\def\QN{Q_{\mathrm{next}}}
\def\bpfrac#1#2{\Big(\frac{#1}{#2}\Big)}
\newcommand{\gkk}[1]{{#1}}
\newcommand{\gk}[1]{{#1}}
\newcommand{\Bin}{\mathrm{Bin}}
\title{\gkk{Better space-time-robustness trade-offs for set reconciliation}}
 \author{Djamal Belazzougui}{CAPA, DTISI, Centre de Recherche sur l'Information Scientifique et Technique, Algiers, Algeria}{djamal.belazzougui@gmail.com}{}{}
 \author{Gregory Kucherov}{LIGM, CNRS, Universit\'e Gustave Eiffel, Marne-la-Vall\'ee, France}{Gregory.Kucherov@univ-eiffel.fr}{https://orcid.org/0000-0001-5899-5424}{}
 \author{Stefan Walzer}{Karlsruhe Institute of Technology, Germany}{stefan.walzer@kit.edu}{https://orcid.org/0000-0002-6477-0106}{}
 \authorrunning{D.~Belazzougui and G.~Kucherov and S.~Walzer}
\keywords{data structures, hashing, set reconciliation, invertible Bloom lookup tables, random hypergraphs, BCH codes} 
\date{}
\def\lookspure{\mathtt{looksPure}}
\begin{document}
	
	\maketitle
	
\begin{abstract}
\gk{
	We consider the problem of reconstructing the symmetric difference between similar sets from their representations (sketches) of size linear in the number of differences. Exact solutions to this problem are based on error-correcting coding techniques and suffer from a large decoding time. Existing probabilistic solutions based on Invertible Bloom Lookup Tables (IBLTs) are time-efficient but offer insufficient success guarantees for many applications. Here we propose a tunable trade-off between the two approaches combining the efficiency of IBLTs with exponentially decreasing failure probability. The proof relies on a refined analysis of IBLTs proposed in (B{\ae}k Tejs Houen et al. SOSA 2023) which has an independent interest. We also propose a modification of our algorithm that enables telling apart the elements of each set in the symmetric difference. 
}
\end{abstract}
	
	\section{Introduction}
	
	The problem of \textit{set reconciliation} (or \textit{database reconciliation} \cite{goodrich2011invertible}) lies at the intersection of several lines of algorithmic research. One intuitive way to define it is in terms of communication protocols: assume that two parties Alice and Bob hold similar sets $S$ and $T$, respectively, that they seek to reconcile (i.e.\ identify differences) using low communication overhead. Rather than exchanging representations of entire sets, they exchange their small \textit{sketches} holding enough information to identify differences of $S$ and $T$ provided that their number does not exceed a pre-defined parameter~$D$. 
	
	In this paper, we study the version of the problem where the differences $S\triangle T := (S\setminus T) \cup (T\setminus S)$ should be recovered from the sketches of $S$ and $T$ rather than from the sketch of one of them and the entire other set. This opens a way to the setting, relevant to some applications, when only sketches of sets are stored in a database, and for each pair of sets, items proper to one of them can be retrieved from the corresponding sketches. As an example, consider a vast database of highly similar genomic sequences, such as those of SARS-Cov-2 genomes\footnote{More than 16M SARS-Cov-2 are available in GISAID database at the time of writing, which typically differ one from another by just a few characters, over about 30,000 characters of length}, that are stored in the form of sketches, rather than raw sequences, still supporting retrieval of differences between any two of them. Beyond this example, set reconciliation problem occurs in many other scenarios in distributed systems, where information needs to be synchronized between computational units. Those include blockchain systems, 
	P2P systems, 
	ad-hoc networks, synchronizing information across devices or data centers, and others. 
	
	The set reconciliation problem is also raised in the streaming framework. It is a common and natural requirement that the sketch can be efficiently updated when a new key is added to or deleted from the set. For example, \cite{EG10} considers the \textit{straggler identification} problem defined on a stream of insertions and deletions of keys modeling a stream of people entering and leaving a building. In this model, the sketch summarizes information about people currently in the building and if their number does not exceed $D$, is capable of listing them. 
		
	\subsection{Prior work}
	\label{sec:prior_work}
	Solutions to set reconciliation can be categorized into probabilistic and exact ones (although there exists an interplay between these two classes of algorithms) \cite{MorganPhdthesis18}. Probabilistic algorithms are based on Invertible Bloom Lookup Tables (IBLTs), also known as Invertible Bloom Filters, based on random hash functions. Originally proposed in \cite{EG10} where they were applied to the problem of \textit{subset reconciliation} (set reconciliation assuming $S\subseteq T$), 
	they (in a slightly different version) were more systematically studied in \cite{goodrich2011invertible}. Earlier a related data structure, called \textit{$k$-set data structure}, was proposed in \cite{GANGULY2007211}. Set reconciliation by computing a difference of IBLTs (called \textit{difference digest}) was studied in \cite{EGUV2011}. 
	Work \cite{mitzenmacher2018simple} further applies IBLTs to \textit{multi-party} set reconciliation. The idea of ``subtracting'' IBLTs has also been applied in \cite{FLS-EUROCRYPT22}, where the emphasis is to store in an IBLT carefully defined hash values obtained using cryptography techniques. Very recent work \cite{HPW22} studies a simplified version of IBLTs that also applies to set reconciliation.
	We rely on this construction in the present paper. 
	
	An exact solution of 
	the set reconciliation problem uses algebraic techniques, in particular \textit{error-correcting codes} \cite{EGUV2011,DBLP:conf/allerton/Cheraghchi11,cheraghchi2019simple}. 
	Work \cite{1226606} proposes an exact solution based on characteristic polynomials to both subset and set reconciliation problems using $(D+1)\log U$ bits of space 
	(i.e.\ essentially the space needed to store $D$ differences). \cite{GANGULY200827} studies the straggler identification problem with multiplicities and proposes an $\mathcal{O}(D\log(Um))$-bit solution based on polynomials over finite field, where each key occurs at most $m$ times. A somewhat similar solution to subset reconciliation was proposed in \cite{EG10}, based on Newton polynomials.  \cite{DBLP:conf/allerton/Cheraghchi11} surveys code-theoretic techniques for space recovery many of which apply to set reconciliation as well. Set reconciliation with BCH codes has been implemented in Minsketch software \cite{minisketch}, leveraging an efficient syndrome decoding algorithm for BCH codes \cite{DBLP:journals/siamcomp/DodisORS08}. 
	\cite{cheraghchi2019simple} proposes a method for sparse recovery based on expander codes; the construction of \cite{cheraghchi2019simple} can be interpreted as a table somewhat similar to IBLT but with a different decoding mechanism. Recent papers  \cite{DBLP:journals/corr/abs-2212-13812,DBLP:conf/isit/BarLevMERY23} apply algebraic techniques to construct specific ``hash functions'' that guarantee successful decoding for IBLTs. The general downside of exact solutions is a larger decoding time, typically growing at least quadratically in $D$ and relying on finite field arithmetic. Note also that the relationship between codes and set reconciliation is two-way: \cite{mitzenmacher2012biff} proposes a construction of codes based on set reconciliation with IBLTs.

	\subsection{Our contribution: overview}	
	The IBLT solution to set reconciliation is very efficient but provides a poor success guarantee. For example, if we have to compute similarity joins by performing	 all pairwise reconciliations of the objects in our database,  then a significant fraction of comparisons may fail. On the other hand, techniques based on error-correcting codes are exact but have a high cost of decoding. 
	Here we propose a family of solutions to set reconciliation offering a trade-off between these two approaches, that is having significantly smaller error rate compared to IBLT at the price of an asymptotically vanishing increase of space and time bounds. 

	Our solution is based on the space-efficient IBLT from \cite{HPW22} complemented by an additional \textit{stash} data structure supporting recovery in case of failure of the main IBLT. \gk{The analysis of \cite{HPW22} provides no guarantee in the case when the IBLT decoding fails. Therefore, here we enhance the analysis of \cite{HPW22} and prove a probability bound on the number of missing and extraneous elements reported by the data structure, in the event that the correct decoding fails. This is a key argument of our construction and its proof constitutes the main technical contribution of this paper.}
	
	The algorithm we obtain is parametrized and provides a tunable trade-off between the failure probability and space and time consumption. 
	In the table below, we provide an overview of how our solution relates to the existing approaches surveyed in Section~\ref*{sec:prior_work}. 
	\begin{table}[h!]
		\begin{center}
			{\small
			\begin{tabular}{c@{}c@{}c@{\ }c@{}cc} 
				\toprule
				\bfseries Method&\makecell{\bfseries Sketch size (bits)}&\bfseries \makecell{Insertion\\time}&\bfseries \makecell{Decoding\\time}&\bfseries \makecell{Failure\\probability}\\
				\midrule
				IBLT \cite{HPW22} &$(c_3+\epsilon)D\log U$&$\mathcal{O}(1)$ &$\mathcal{O}(D)$ &$\mathcal{O}(D^{-1})$\\ 
				\makecell{IBLT with $t$-bit \\hashsum field} & $(c_3+\epsilon)D(\log U+t)$ & ~~$\mathcal{O}(1+t/\gkk{\log U})$ & $\mathcal{O}(D(1+t/\gkk{\log U}))$~~ & $\min\left(\mathcal{O}(D^{-1}),\mathcal{O}(D/2^t)\right)$\\
				BCH &$D\log U$&\gkk{$\mathcal{O}(D\log U)$} &\gkk{$\mathcal{O}(D^2\log^2 U)$} &$0$\\ 
				expander code \cite{cheraghchi2019simple} & $\mathcal{O}(D\log^2 U)$ & $\mathcal{O}(\log U)$ & $\mathcal{O}(D\log U)$ & 0 \\ 
				\midrule
				this paper & $(c_3+\epsilon)D\log U+r\log U$ & \gkk{$\mathcal{O}(r\log U)$} & \gkk{$\mathcal{O}(D)$} & $2^{-\Omega(r)}$ \\ 
				\bottomrule
			\end{tabular}
		}
		\end{center}
		\caption{Comparison of main techniques of set reconciliation to the results of the present work. $r$ is a parameter assumed to verify \gkk{$r=\min\left(\mathcal{O}(D/\log^2 U),\mathcal{O}(\log U)\right)$}. For our algorithm, decoding time given is expected.}
		\label{table:1}
	\end{table}

	The paper is organized as follows. In Section~\ref{sec:IBLT}, we introduce our version of IBLT, which is a slightly modified set sketch from \cite{HPW22}. This data structure is very space-efficient, however its compactness has a price: the decoding process can ``go wrong'' triggering undesirable \textit{anomalous} steps \cite{HPW22}. 
	In Section~\ref{sec:anomalies}, we 
	provide a refined analysis of anomalies and prove the following fundamental property: if the sketch stores a set $S$, the decoding produces a set $S_{dec}$ with $|S \triangle  S_{dec}| \leq  r$, with a failure probability of $2^{-\Omega (r)}$. 
	Section~\ref{sec:err_correct} introduces a solution to set reconciliation based on BCH error-correcting codes. 
	Relying on that result, in Section~\ref{sec:IBLT_with_stash}, we extend the IBLT with a backup data structure (stash) to obtain an efficient solution to set reconciliation. The algorithm resorts to the stash when the decoding with the main IBLT is not completed, leaving out a small number of keys. The stash is defined using error-correcting codes. In Section~\ref{sec:distinguishing}, we discuss how to extend our algorithm in order to identify the original set of the keys of the set difference. 
	
	\section{Set reconciliation with Invertible Bloom Lookup Tables}
	\label{sec:IBLT}
	\subparagraph{Definition of IBLT}
	\label{sec:def_IBLT}
	An Invertible Bloom Lookup Table (IBLT) is an array $A[1:n]$ equipped with $k$ random hash functions $h_1,\ldots,h_k:\mathcal{U}\rightarrow [n]$, from the key universe $\mathcal{U}$  to $[n]=\{1,\ldots,n\}$. We assume $h_i$'s to be fully \gkk{random} and denote $h(x)=\{h_1(x),\ldots,h_k(x)\}$. 
	
	Several variants of IBLTs have been considered, which differ in how entries $A[i]$ are configured. In all of them, $A[i]$ contains a 
	keysum field that holds either bitwise XOR \cite{goodrich2011invertible,EGUV2011,mitzenmacher2018simple,HPW22} or arithmetic sum \cite{EG10,goodrich2011invertible} of all current keys hashed to $i$. XOR provides a more elegant and space-efficient option whereas arithmetic sum becomes necessary when multi-sets are considered. In this work, the keysum field is defined with XOR and is set to contain $\log U$ bits ($U=|\mathcal{U}|$ ). 
	
	Besides the keysum, $A[i]$ can include a hashsum field that holds a hash sum of the keys in $A[i]$ and/or a counter field that tracks the current number of keys in $A[i]$. These are used for enforcing proper functioning and integrity of the data structure at the price of using additional space. 
	In this work, we use the most compact IBLT configuration with three hash functions ($k=3$) and $A[i]$ storing the keysum field only. This variant was introduced and analysed in recent paper \cite{HPW22} that we rely on in this work. Later in Section~\ref{sec:distinguishing} we consider an extension introducing a restricted counter field. 
	
\begin{figure}
	\begin{minipage}{0.55\textwidth}
		\begin{algorithm}[H]
			\algo{\texttt{initialise}}{
				$A[1:n] = (0,\dots ,0)$\;
			}
		\end{algorithm}
		\begin{algorithm}[H]
			\algo{\texttt{toggle}($x$)}{
				\For{$i \in  h(x)$}{
					$A[i] \leftarrow  A[i] \oplus  x$\;
				}
			}
		\end{algorithm}
		\begin{algorithm}[H]
			\algo{\texttt{merge}($A[1:n],A'[1:n]$)}{
				\For{$i \in  [n]$}{
					$B[i] \leftarrow  A[i] \oplus  A'[i]$\;
				}
				\Return $B$
			}
		\end{algorithm}
		\begin{algorithm}[H]
			\algo{\texttt{looksPure}($i \in  [n]$)}{
				\Return $A[i] \neq  0 \wedge  i \in  h(A[i])$\;
			}
		\end{algorithm}
	\end{minipage}\hspace{-6em}
	\begin{minipage}{0.65\textwidth}
		\begin{algorithm}[H]
            \SetKw{Not}{not}
            \SetKw{ManualIf}{if}
            \algo{\texttt{decode}}{
                $\Sdec \leftarrow  \varnothing $\;
                $Q \leftarrow  \{i \in  [n] \mid \op{looksPure}(i)\}$\;
				$(t,t_{\max}) \leftarrow  (0,2n)$ \tcp{time limit $2n$}
                \While{$Q \neq  \varnothing $}{
                    $\QN \leftarrow  \varnothing $\;
                    \For{$i \in  Q$ \ManualIf \op{looksPure}$(i)$}{
                        $x \leftarrow  A[i]$ \tcp{\emph{detected} key $x$} 
                        $\op{toggle}(x)$ \;
                        $\Sdec \leftarrow  \Sdec \triangle  \{x\}$\;  
                        $\QN \leftarrow  \QN \cup  \{i \in  h(x) \mid \op{looksPure}(i)\}$\;
                        $t \leftarrow  t + 1$\;
                        \If{$t \geq  t_{\max}$}{
                        	\Return $\Sdec$\;
                        }
                    }
                    $Q \leftarrow  \QN$
                }
                \Return $\Sdec$
            }
        \end{algorithm}
	\end{minipage}
	\caption{IBLT implementation from \cite{HPW22} with added time limit in decode.}
	\label{fig:algorithms}
\end{figure}
	
Our IBLT implementation is defined in \cref{fig:algorithms}.  Initially, all entries $A[i]$ are set to zero with $\mathtt{initialise}$.
Both insertion and deletion of a key $x$ is done by performing $\mathtt{toggle}(x)$. 

	Decoding the keys stored in an IBLT resembles the \textit{peeling} process of the $k$-hypergraph, where $n$ IBLT entries correspond to hypergraph vertices and each stored key corresponds to a hyperedge defined as the set of entries the key is hashed to. Peeling a hypergraph iterates the following operation: for any vertex of degree 1,  delete the incident hyperedge. If the peeling results in the empty graph with no hyperedges, the input graph is called \textit{peelable}; otherwise the process yields the largest subhypergraph with every vertex of degree at least 2, which is called the \textit{$2$-core} (hereafter simply \textit{core}). 
	It is known that peelability of random $k$-hypergraphs ($k\geq 3$) with $n$ vertices and $m$ hyperedges undergoes a phase transition: if $n > c_k m$, then the hypergraph is peelable with high probability (hereafter, \textit{whp}), whereas if $n < c_k m$, the hypergraph is not peelable whp. Here constant $c_k$ is the peelability threshold, in particular 
	$c_3 = 1.22179\dots $  is the smallest of $c_k$ for $k\geq 3$. The following results is shown in \cite{goodrich2011invertible}. 
	\begin{theorem}
		\label{thm:core_proba}
		Whenever $n > c_k m$, a random $k$-hypergraph is peelable except with probability $\mathcal{O}(1/n^{k-2})$. 
	\end{theorem}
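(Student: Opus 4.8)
The plan is to use the standard equivalence that a $k$-hypergraph is peelable if and only if its $2$-core is empty. Hence it suffices to bound the probability that the random hypergraph contains a non-empty $2$-core, that is, a non-empty set of hyperedges within which every incident vertex has degree at least $2$. I would bound this probability by a first-moment (union bound) argument, summing the expected number of potential cores over all admissible sizes.

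A witness for a non-empty core is a set of $a \geq 2$ hyperedges spanning a set of $b$ vertices, each of degree at least $2$. Counting incidences gives $ka \geq 2b$, hence $b \leq ka/2$. Bounding the number of choices of edges by $\binom{m}{a}$, the number of spanned vertices by $\binom{n}{b}$, and the probability that all $ka$ hash values land inside the chosen $b$ vertices by $(b/n)^{ka}$, the expected number of witnesses is at most
\[
  \sum_{a \geq 2}\ \sum_{b \leq ka/2} \binom{m}{a}\binom{n}{b}\Big(\frac{b}{n}\Big)^{ka}.
\]

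The crux is to show that this sum is dominated by its smallest term. For $a = 2$ the degree condition forces the two edges to be parallel (identical vertex sets), so $b = k$; the contribution is of order $\binom{m}{2}/n^{k}\cdot\mathrm{poly}(k) = \Theta(m^{2}/n^{k})$. Since $n > c_k m$ gives $m = \mathcal{O}(n)$, this equals $\mathcal{O}(n^{2-k}) = \mathcal{O}(1/n^{k-2})$, exactly the claimed bound. I would then verify that every other term is of strictly smaller order. For constant $a \geq 3$ a short computation (e.g.\ comparing exponents of $n$ after Stirling) shows the terms shrink by a factor $n^{-\Omega(1)}$ relative to the $a=2$ term.

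The main obstacle is the range of large witnesses with $a = \Theta(n)$, where the combinatorial prefactors $\binom{m}{a}\binom{n}{b}$ are exponentially large and only the hypothesis $n > c_k m$ prevents the sum from blowing up. Here the peelability threshold $c_k$ enters in an essential way: it is precisely the edge density at which the exponential growth rate of the summand --- obtained via Stirling's formula and optimization over $b \leq ka/2$ --- changes sign. Below threshold this rate is negative, so the linear-size terms are exponentially small, and the delicate part is to confirm this uniformly across the intermediate range of $a$ and to sum the resulting geometric tail. Combining the dominant parallel-edge term with this negligible remainder yields the overall bound $\mathcal{O}(1/n^{k-2})$.
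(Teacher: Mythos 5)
Note first that the paper does not actually prove \cref{thm:core_proba}; it imports it from \cite{goodrich2011invertible}. So your attempt has to be measured against the standard proof behind that citation, whose two-regime structure is in fact mirrored inside this very paper (\cref{lem:no-small-core} for small witness sets, \cref{lem:molloy} plus \cref{lem:many-good-vertices} for the linear-size regime).

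Your argument has a genuine gap, and it sits exactly where you locate ``the main obstacle''. The union bound you write down is vacuous in the linear-size regime: your summand keeps only the incidence-counting consequence $b \leq ka/2$ of the minimum-degree-$2$ condition and discards the condition itself. Take $a = m$ and $b = n$. The constraint $n \leq km/2$ is satisfied in the critical regime $n = (c_k+\varepsilon)m$, because $c_k < k/2$ for every $k \geq 3$ (e.g.\ $c_3 \approx 1.22 < 1.5$), so this term is admissible, and it equals $\binom{m}{m}\binom{n}{n}(n/n)^{km} = 1$. Hence your sum is at least $1$ and proves nothing. This is not a removable boundary artifact: for $a = m$ and $b = \beta n$ with $\beta$ slightly below $1$, the exponent per vertex of your summand is $H(\beta) + \frac{k}{c}\log_2\beta$ (where $c = n/m$), and since $H(\beta) = \Theta\big((1-\beta)\log_2\frac{1}{1-\beta}\big)$ dominates the linear term $\frac{k}{c}\log_2 \beta = -\Theta(1-\beta)$, these terms are exponentially \emph{large}. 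So the assertion that below the threshold ``this rate is negative'' is false for the quantity you are summing.

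The deeper problem is that $c_k$ is not a first-moment threshold at all: it is defined by the fixed-point analysis of the peeling \emph{process} (Molloy \cite{molloy2005cores}), not by the sign change of any counting exponent. Even if you reinstate the probability that every spanned vertex receives at least two incidences (a balls-in-bins large-deviation factor your bound omits), showing that the resulting exponent is negative for \emph{all} linear sizes at \emph{every} density below $1/c_k$ is not a ``short verification'' --- it is the entire difficulty, and no known proof of the theorem proceeds this way. The standard argument, and the one you should adopt, splits into two regimes: (a) a union bound exactly like yours, but only over witness sets of size at most $\delta n$ for a small constant $\delta$ --- in that range your computation is correct, and the dominant contribution is indeed the parallel-edge term $\Theta(m^2/n^k) = \mathcal{O}(n^{2-k})$; and (b) an analysis of the peeling dynamics showing that, below the threshold, peeling reduces the hypergraph to fewer than $\delta n$ vertices except with probability exponentially small (so no core of size larger than $\delta n$ can survive). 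The paper's own machinery makes this split explicit: \cref{lem:no-small-core} is stated only for $5r \leq |I| \leq \delta n$ with $\delta = 1/(2e^{15})$ --- the authors deliberately do not push the union bound past $\delta n$ --- and the linear regime is handled by \cref{lem:molloy} together with the McDiarmid concentration argument. Without this second, process-based ingredient, the ``geometric tail'' you intend to sum simply does not exist.
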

%
	\subparagraph{Set reconciliation using difference IBLT}
	\label{sec:difference_IBLT}
	Consider two sets $S$ and $T$ stored in IBLTs $A_S[1:n]$ and $A_T[1:n]$ respectively, using the same hash functions. If $S$ and $T$ have a bounded symmetric difference, that is $|S\Delta  T|\leq D$, and $n > c_k D$, then the keys of $S\Delta  T$ can be recovered whp from the \textit{difference IBLT} $A_{ST}[1:n]=\mathtt{merge}(A_S,A_T)$. It is immediate to see that common keys of $S$ and $T$ are canceled out and $A_{ST}$ stores exactly the keys $S\Delta  T$. Thus, the difference IBLT can solve set reconciliation except with probability $\mathcal{O}(1/D)$. 
	
	Note that $A_{ST}$ does not allow distinguishing keys of $S \setminus T$ and $T \setminus S$; below in Section~\ref{sec:distinguishing} we will extend the construction to make this possible. Note also that for the reconciliation to be possible, the size $n$ of IBLTs only depends on the size of the symmetric difference and does not depend on the sizes of $S$ and $T$. 
	

	\section{Improved Guarantees for IBLTs}
	\label{sec:main_iblt}
	
	Consider the IBLT implementation as given in \cref{fig:algorithms}. 
	It uses only the keysum field in its buckets and was studied, modulo slight changes stated below, in \cite{HPW22} under the name ``simple set sketch''. 
	In this section we enhance the analysis given in \cite{HPW22} (reproduced as (i) in Theorem~\ref{thm:main-inner} below) by also probabilistically bounding the magnitude of the error of decode in its failure cases (given as (ii)). Note that we restrict our attention to three hash functions ($k=3$) as this yields the best threshold value.\footnote{In \cite{HPW22} any $k \geq  3$ was considered and our analysis here could likewise be extended.}
	
	\begin{theorem}
		\label{thm:main-inner}
		Let $\varepsilon  > 0$ and $n>(c_3+\varepsilon )|S|$. Let $\Sdec$ be the set returned by decode.
		\begin{enumerate}[\upshape(i)]
			\item  $\Pr[\Sdec = S] = 1-\widetilde{\mathcal{O}}(1/n)$.
			\item  For any $r = o(n)$ we have $\Pr[|S \triangle  \Sdec| > r] = 2^{-\Omega (r)}$.
		\end{enumerate}
	\end{theorem}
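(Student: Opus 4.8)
Part (i) is exactly the ``simple set sketch'' guarantee of \cite{HPW22}, which I would take as given; the new content is the tail bound (ii), and the plan is to characterize the failure through the \emph{final} state of the table and then control that state by a first-moment argument over all possible ``residual cores''.

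First I would establish a structural reduction. After \op{decode} terminates, the table $A$ encodes exactly the residual key set $R := S \triangle \Sdec$, in the sense that each $A[i]$ equals the XOR of the keys of $R$ hashed to $i$ (every key is toggled out an odd number of times iff it lies in $\Sdec$, so the surviving parity pattern is $S \triangle \Sdec$). If \op{decode} halts because $Q=\varnothing$, then no bucket looks pure, i.e.\ for every $i$ either $A[i]=0$ or $i\notin h(A[i])$. Consequently no bucket can contain exactly one key of $R$: such a bucket would satisfy $A[i]=x\neq 0$ and $i\in h(x)=h(A[i])$, contradicting termination. Hence the sub-hypergraph induced by $R$ on the vertex set $[n]$ --- whose hyperedges are the keys of $R$ together with their hash-triples --- has minimum degree at least $2$; I call it the \emph{residual core}, so $|S\triangle\Sdec|$ is its number of edges, and it suffices to bound the probability that it has more than $r$ edges. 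The alternative termination via $t\ge t_{\max}=2n$ requires at least $2n$ detections while there are at most $|S|\le n$ genuine keys, forcing $\Omega(n)$ anomalous detections, an event I fold into the anomaly bound below and which is $2^{-\Omega(n)}\le 2^{-\Omega(r)}$ for $r=o(n)$.

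Next I would run a first-moment bound on residual cores built only from genuine edges (the no-anomaly case). For a structure with $j$ edges and $v$ vertices, minimum degree $\ge 2$ forces $3j\ge 2v$, so $v\le 3j/2$, and the expected number of such structures is at most
\[
E_j \;\le\; \sum_{v\le 3j/2}\binom{|S|}{j}\binom{n}{v}\Big(\tfrac{v}{n}\Big)^{3j}.
\]
The summand increases in $v$, so it is dominated by $v=3j/2$, and a short calculation yields a per-edge factor of order $(|S|/n)(j/n)^{1/2}$. Since $|S|/n<1/c_3<1$, this already gives $E_j\le\rho^j$ with some $\rho<1$ for all $j$ up to a small constant fraction of $n$ (with extra decay as $j/n\to 0$); the remaining near-linear range $j=\Theta(n)$ is where the sharp threshold matters, and invoking the sharp first-moment analysis underlying \cref{thm:core_proba} (as in \cite{HPW22}), the hypothesis $n>(c_3+\varepsilon)|S|$ supplies $E_j\le\rho(\varepsilon)^j$ for all $1\le j\le|S|$. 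Summing, $\sum_{j>r}E_j=2^{-\Omega(r)}$ whenever $r=o(n)$. I would then incorporate anomalies by allowing \emph{phantom} edges: a universe element $\xi=A[i]$ equal to the XOR of at least two keys that is wrongly detected at bucket $i$, which requires the coincidence $i\in h(\xi)$. Under full randomness, as long as $\xi$ has not been queried before, $h(\xi)$ is fresh and uniform, so the coincidence has probability $\approx 3/n$ and equips $\xi$ with two further uniform endpoints; extending the first moment to augmented cores with $a$ genuine and $b$ phantom edges, each phantom carries an extra factor $O(1/n)$ while otherwise behaving like a random edge, so the per-edge factor stays bounded away from $1$ and the expected number of augmented cores of size $s$ remains $2^{-\Omega(s)}$, which I sum over $s>r$.

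The main obstacle is precisely the control of phantoms. A naive union bound would range over $\binom{|\mathcal U|}{b}$ choices of phantom keys, which is hopelessly large. The resolution --- the technical heart of the refinement --- is that a phantom's \emph{value} is not free but determined by the keys already present, and its activation is an independent low-probability coincidence $i\in h(\xi)$ of probability $\approx 3/n$; one must additionally justify the fresh randomness of $h(\xi)$ by ruling out the rare event that $\xi$ coincides with a previously queried key or phantom. Getting the bookkeeping to attach a genuine $O(1/n)$ saving to each phantom, uniformly over the cascade of detections, is the delicate step, after which the core contribution and the anomaly contribution collapse into a single exponentially decaying first-moment sum.
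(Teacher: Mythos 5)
Your structural reduction for the $Q=\varnothing$ termination case is correct (the residual set $R = S\triangle \Sdec$ induces a subhypergraph of minimum degree at least $2$), and your first moment does work for structures of size between $r$ and $\delta n$ for a \emph{small absolute constant} $\delta$ --- this is essentially the paper's \cref{lem:no-small-core}. The fatal gap is the linear-size regime: the ``sharp first-moment analysis underlying \cref{thm:core_proba}'' that you invoke to get $E_j \le \rho(\varepsilon)^j$ for \emph{all} $j \le |S|$ does not exist, and the claimed bound is false. The core transition at $c_3$ is first-order, and the expected number of minimum-degree-$2$ subhypergraphs of linear size is exponentially \emph{large} just below the threshold: at $m/n = 1/c_3 \approx 0.818$, taking $v \approx 0.358\,n$ vertices and $j \approx 0.300\,n$ edges (the parameters of the 2-core that emerges at the threshold), the four factors
\[
\binom{n}{v}\binom{m}{j}\Big(\tfrac{v}{n}\Big)^{3j}\,
\Pr\big[\text{each of the $v$ vertices is covered at least twice}\big]
\]
contribute exponents $\approx (0.652 + 0.538 - 0.924 - 0.194)\,n = 0.072\,n$ (the last term via a saddle-point estimate for the truncated Poisson), i.e.\ the expectation is $e^{\Omega(n)}$, even though by \cref{thm:core_proba} \emph{no} minimum-degree-$2$ subhypergraph exists whp (every such subhypergraph lies in the 2-core). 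This is the familiar gap between the union-bound threshold and the density-evolution threshold for stopping sets of LDPC codes. Consequently $\sum_{j>r}E_j$ is vacuous for every $r=o(n)$, and this is precisely why Molloy's proof of \cref{thm:core_proba}, and this paper, analyse the peeling \emph{process} in the linear regime --- \cref{lem:molloy}, McDiarmid concentration of the locally-determined set of good vertices (\cref{lem:many-good-vertices}), and the local anomaly-spreading argument (\cref{lem:no-large-core}) --- reserving the first moment for $|I|\le \delta n$ only. Your proposal has no substitute for this machinery, which is the technical bulk of the paper's proof.

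The time-limit branch is also mishandled. You discard it by asserting that $\Omega(n)$ anomalous detections have probability $2^{-\Omega(n)}$, but not all anomalous detections carry fresh $O(1/n)$ randomness: an \emph{internal} anomaly --- a subset $A\subseteq S$ with $\bigoplus_{x\in A}x=0$ whose keys share a bucket --- re-inserts an already-peeled \emph{genuine} key, whose hashes are fixed by the hypergraph; the XOR condition is a deterministic property of the worst-case set $S$, and once such a configuration is present the same anomaly can fire in \emph{every} round. A constant-size configuration of this kind sustains an infinite loop, hence reaches $t_{\max}=2n$, with probability $n^{-O(1)}$, not $2^{-\Omega(n)}$; since $n^{-O(1)} \gg 2^{-\Omega(r)}$ for $r = \omega(\log n)$, this branch cannot be written off by its probability at all --- one must show $|S\triangle \Sdec|$ is small \emph{conditioned on} reaching the time limit, and your minimum-degree-$2$ characterization of the residual set is unavailable there. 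The paper does exactly this with per-round accounting: \cref{claim:anomalous-steps} bounds anomalous steps per round by $|N|+3|F|$ using \cref{lem:few-internal-anomalies,lem:few-foreign-keys}, and \cref{claim:geometric-decrease,claim:stability,claim:last-round-small} show the set of touched buckets shrinks geometrically and then stays of size $O(r)$ even while anomalies keep firing. Your foreign-key ``phantom'' idea is sound in spirit (it parallels the paper's guessing game, \cref{lem:abstract-game,lem:few-foreign-keys}), but without the linear-regime process analysis and without a dynamic treatment of internal anomalies the proof does not go through.
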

	\subparagraph{Changes made to decode.}
	Our version of decode differs from that in \cite{HPW22} in two minor ways. First, we have introduced a \emph{time limit} of $2n$ to the number of peeling steps that are performed and return the current state of $\Sdec$ when the time limit is reached\footnote{In practice a dynamic time limit could be considered. For instance: Abort if a round of decode only toggles keys that have already been toggled in a previous round.}. The result from \cite{HPW22} still applies with this change, since the authors proved that decode terminates in $n + \mathrm{poly\,log}(n)$ peeling steps with $\Sdec = S$ with probability $1-\mathcal{O}(1/n)$.
	Second, we insist that $Q$ and $\QN$ are implemented as set data structures, e.g. using hash tables, whereas \cite{HPW22} pointed out that $Q$ and $\QN$ can be implemented as multiset data structures, e.g.\ using FIFO queues.%
	\footnote{At no point did the analysis from \cite{HPW22} actually \emph{rely} on $Q$ or $\QN$ being multisets. The motivation was merely that FIFO queues are the more lightweight data structure.
	We revert this simplification here because an argument from \cite{HPW22} showing that the impact of duplicated elements is negligible no longer applies. In fact, with a FIFO implementation of $Q$ and $\QN$ we could see $2^R$ copies of the same element in $Q$ in the $R$th iteration of the while loop with small but non-negligible probability $n^{-\mathcal{O}(1)}$.}
	
	It should be clear that our version of decode can be implemented such that it has running time $\mathcal{O}(n)$, both in expectation and with high probability.
	
	\subsection{The issue of anomalies}

	Decode checks if the $i$th \emph{bucket} of $A$ contains a single key
	using the function $\lookspure(i)$ that checks if $h_j(A[i])=i$ for some $j\in\{1,2,3\}$. Importantly, $\lookspure(i)$ may produce a false positive result when $A[i] = x_1 \oplus  x_2 \oplus  \dots  \oplus  x_{\ell -1}$ is the sum of several keys hashing to $i$ and $y = A[i]$ also happens to hash to $i$, i.e.\ $h_\sigma (y) = i$ for some $\sigma \in\{1,2,3\}$. Such an occurence $\{x_1,\dots ,x_{\ell -1},y\}$ is called an \textit{anomaly} (of size $\ell $) in \cite{HPW22} and causes the key $y$ to be toggled and thus inserted into the data structure.
	Anomalies corrupt the decoding process and
	may cause decode to fail or even to run in an infinite loop if no time limit is set \cite{HPW22}.  


	Dealing with anomalies is a challenging issue. In \cite{HPW22}, 
	it was proved that certain kinds of anomalies and certain interactions of several anomalies do not occur except with probability $\tilde{O}(1/n)$. Since $\mathcal{O}(1/n)$ is the probability for the existence of a non-empty core (see Theorem~\ref{thm:core_proba} for $k=3$), which also causes decoding to fail, anomalies do not increase the failure probability significantly.
	
	In this section we will go further, proving that \emph{even in the unlikely cases where decoding produces an incorrect result}, e.g.\ due to problematic anomalies, the magnitude of the error is likely to be small.
	
	

	
	\subsection{Proof Outline}
	\label{sec:anomalies}
	
	We quantify the effect of anomalies by considering two sets. First, the set $F$ of all \emph{foreign keys}, which are keys from $\mathcal{U} \setminus S$ that are added to the sketch at least once \gkk{during the decoding process}. Second, the set $N$ of all centres of certain anomalies we call \emph{internal anomalies} (defined in \cref{sec:internal-anomalies}). Our proof of Theorem~\ref{thm:main-inner} rests on the following arguments.
	\begin{itemize}
		\item  In \cref{sec:foreign-keys} we bound the size of $F$. Intuitively, the heuristic \op{looksPure} is used $\mathcal{O}(n)$ times and fails each time with probability $\mathcal{O}(1/n)$. Each failure can cause one foreign key to be added, suggesting $\mathbb{E}[|F|] = \mathcal{O}(1)$. The precise argument, including a concentration bound for $|F|$, is more subtle since failures do not happen independently.
		\item  In \cref{sec:internal-anomalies} we import a tail bound on $|N|$ from \cite{HPW22}.
		\item  In \cref{sec:early-peeling} we reframe the decoding process as the well-understood \emph{peeling process} on a random hypergraph. Since peeling follows a local rule, the corruption due to anomalies spreads locally as well. This already implies that the majority of keys are correctly decoded if anomalies are rare.
		\item  In \cref{sec:late-peeling} we prove a claim regarding the maximum hyperedge density of any subhypergraph of a given random hypergraph. Its intuitive role is to show that the size of the queue $Q$ in late rounds of decode would always be, in the absense of anomalies, at least linear in the number of remaining keys. This implies that the anomalies only have the potential to stall progress in a significant way if they affect a number of buckets that is linear in the number of remaining keys.
		\item  In \cref{sec:main-proof} we stitch everything together: The effect of anomalies as quantified by $|F|$ and $|N|$ is probabilistically bounded. Peeling is likely to proceed largely unhindered in its early rounds. In its later rounds it proceeds until the number of remaining keys is linear in $|F|+|N|$.
	\end{itemize}
	
	\subsection{Bounding Foreign Keys}
	\label{sec:foreign-keys}

	Let $y \in  \mathcal{U} \setminus S$ be any foreign key. In order for $y$ to be added to the IBLT, $y$ must occur as the sum of several keys in one of the buckets $h_1(y),h_2(y),h_3(y)$. Since $h_1(y),h_2(y),h_3(y)$ are uniformly random in $[n]$ and independent of the initial state of the data structure, decode would have to effectively ``guess'' a hash of $y$. We argue that decode is, in this sense, a player in a corresponding guessing game that, even if played optimally, rarely allows a player to guess a lot of hashes correctly.
	
	\subparagraph{A guessing game.}
	Let $(X_{j,\sigma })_{j \in  \mathbb{N}, \sigma  \in  [3]}$ be a family of i.i.d.\ random variables with uniformly random values in $[n]$. Consider a game in which the player does not know the random variables. She may issue queries of the form $(i,j) \in  [n] \times  \mathbb{N}$ and is told whether or not $i \in  \{ X_{j,\sigma } \mid \sigma  \in  [3]\}$. If she gets a positive answer, we say she has \emph{solved} group $j$ and she is told the values $X_{j,1}$,$X_{j,2}$ and $X_{j,3}$.
	\begin{lemma}
		\label{lem:abstract-game}
		Assume a player is given $6n$ queries. She solves more than $r$ groups with probability $\mathcal{O}(2^{-r})$.
	\end{lemma}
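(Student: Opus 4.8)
The plan is to read the number of solved groups as a sum of conditionally bounded indicators and to stochastically dominate it by a Poisson random variable of constant mean, from which the $\mathcal{O}(2^{-r})$ tail is immediate by a Chernoff estimate. First I would make two harmless simplifications of the player's strategy: she never queries a group she has already solved (she learns nothing new), and she never repeats a position within the same group (the answer is predetermined). Both reductions only help the player, so any upper bound on the number of solves proved under them is valid in general. Under these conventions every query is a probe $(i,j)$ to a still-unsolved group $j$ at a fresh position $i$. If group $j$ has accumulated $e$ negative answers so far, then, conditioned on the entire history, the three hidden values $X_{j,1},X_{j,2},X_{j,3}$ are i.i.d.\ uniform on the $n-e$ not-yet-excluded positions (the groups are independent, so other groups carry no information about $j$), and hence by a union bound the probe solves $j$ with conditional probability at most $3/(n-e)$.

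Writing $S_t$ for the indicator that the $t$-th probe solves a new group, the number of solves is $\sum_{t\le 6n}S_t$ with $\mathbb{E}[S_t\mid\mathcal{F}_{t-1}]\le 3/(n-e)$. The main obstacle is that this conditional probability is \emph{not} uniformly $\mathcal{O}(1/n)$: as a group collects many negatives, $e$ grows and the bound degrades, and in the extreme (a group probed $n-1$ times) the next probe is certain to succeed. A naive compensator bound therefore only controls $\sum_t 3/(n-e)$ up to $\Theta(\log n)$, which is too weak to give $\mathcal{O}(2^{-r})$ in the interesting regime $r=o(\log n)$. The observation that resolves this is a budget argument: pushing the conditional solve-probability of a group above $\mathcal{O}(1/n)$ requires spending many probes on it, and the total budget is only $6n$.

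Concretely I would fix a threshold and call a probe \emph{expensive} if it is issued to a group that already carries more than $n/2$ negative answers, and \emph{cheap} otherwise. A group can produce an expensive probe only if it is probed more than $n/2$ times; since the budget is $6n$, at most $12$ groups are ever probed that often, so at most $12$ solves occur on expensive probes. Every cheap probe has $e\le n/2$ and thus conditional solve-probability at most $3/(n/2)=6/n$; as there are at most $6n$ cheap probes, the standard martingale/Chernoff computation gives $\mathbb{E}[\exp(\lambda\sum_{\mathrm{cheap}}S_t)]\le\exp(36(e^{\lambda}-1))$, i.e.\ the number of cheap solves is dominated by a $\Po(36)$ variable. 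Combining the two parts, the total number of solves is stochastically dominated by $12+\Po(36)$.

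It then remains to invoke the Poisson tail $\Pr[\Po(36)\ge s]\le (36e/s)^s$, which is at most $2^{-s}$ once $s$ exceeds a fixed constant; this yields $\Pr[\text{more than }r\text{ solves}]=\mathcal{O}(2^{-r})$ for all large $r$, while for bounded $r$ the statement is trivial since $2^{-r}=\Theta(1)$. The genuinely delicate steps in turning this sketch into a full proof are (a) justifying the $3/(n-e)$ conditional bound rigorously despite the player's adaptivity, via the exchangeability of the hidden values conditioned on the revealed negatives, and (b) the clean martingale domination of the cheap solves by $\Po(36)$; the expensive/cheap split itself is elementary once the $6n$-query budget is brought in.
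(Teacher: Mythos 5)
Your proposal is correct and takes essentially the same approach as the paper's proof: the same budget split at the $n/2$-query threshold yielding at most $12$ ``free'' solves, the same $6/n$ conditional success bound for all remaining probes, and a tail bound on the sum of at most $6n$ low-probability indicators, concluding with a bound of the form $(36e/s)^s \leq 2^{-s}$. The only (cosmetic) difference is that the paper couples the cheap successes to a $\Bin(6n,\tfrac{6}{n})$ variable and applies a union bound over success sets, whereas you dominate them by $\Po(36)$ via the Chernoff/MGF computation; the two computations are interchangeable here.
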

	\begin{proof}
		Consider a query $(i,j) \in  [n] \times  \mathbb{N}$ that is the $k$th query of the form $(\cdot ,j)$ where $j$ is an unsolved group. We distinguish two cases.
		\begin{description}
			\item [Case 1: $\bm{k \geq  n/2}$.]
			We generously grant group $j$ as solved. This affects at most $6n/(n/2) = 12$ groups.
			\item [Case 2: $\bm{k \leq  n/2}$.]
			Up to $k-1$ values are already excluded for the variables $X_{j,1}$, $X_{j,2}$ and $X_{j,3}$. All other values are equally likely. The probability of solving group $j$ with this query is at most $1-(1-\frac{1}{n-k+1})^3 < 1-(1-\frac{2}{n})^3 \leq  \frac{6}{n}$.
			
			Let $Y$ be the number of successes due to Case 2. 
			A simple coupling yields a random variable $Z$ with $Y \leq  Z$ and $Z \sim \Bin(6n,\frac{6}{n})$. We can think of $Z$ as the sum of the outcomes of $6n$ Bernoulli trials. In order for $Z$ to attain a value of at least $\ell  = r-12$ there must exist a set $T \subseteq  [6n]$ of size $\ell $ such that the corresponding Bernoulli trials are all successes. A union bound over all choices of $T$ yields:
			\begin{align*}
				\Pr[Y \geq  \ell ] \leq  \Pr[Z \geq  \ell ] \leq  \binom{6n}{\ell }\bpfrac{6}{n}^{\ell }
				\leq  \bpfrac{6ne}{\ell }^{\ell } \bpfrac{6}{n}^{\ell } \leq  \bpfrac{36e}{\ell }^{\ell } = \bpfrac{36e}{r-12}^{r-12} \leq  2^{-r+12}.
			\end{align*}
			where the last step assumes $r \geq  72e+12$.
		\end{description}
		Overall, if $r \geq  72e+12$ then the number of successes is bounded by $r$, except with probability $\mathcal{O}(2^{-r})$.
	\end{proof}
	
	\subparagraph{How this relates to the foreign keys.}
	
	Let $F \subseteq  \mathcal{U} \setminus S$ be the set of all foreign keys that are added to $\Sdec$ at least once during the execution of decode.
	\begin{lemma}
		\label{lem:few-foreign-keys}
		For any $r \in  \mathbb{N}$ we have $\Pr[|F| > r] = \mathcal{O}(2^{-r})$.
	\end{lemma}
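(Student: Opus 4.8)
The plan is to realize $\op{decode}$ as a player in the guessing game of \cref{lem:abstract-game} and identify each foreign key that ever enters the sketch with a group that this player solves. First I would condition on $S$ together with all hash values $h_1(x),h_2(x),h_3(x)$ of the keys $x\in S$. After this conditioning the initial sketch $A[1:n]$ is fixed, whereas the hash triples $(h_1(y),h_2(y),h_3(y))$ of the foreign keys $y\in\mathcal{U}\setminus S$ remain i.i.d.\ uniform over $[n]^3$ and independent of the initial state. This is exactly the family $(X_{j,\sigma})$ from the game: I would use one group $j=j(y)$ per foreign key $y$, with $X_{j(y),\sigma}=h_\sigma(y)$.

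Next I would establish the coupling. The key structural observation is that a foreign key $y$ cannot influence the execution until it is toggled into the sketch for the first time, and it can only be toggled in when some evaluation $\op{looksPure}(i)$ returns true with $A[i]=y$, i.e.\ precisely when $i\in h(y)$. Hence, up to the first moment a hash of $y$ is discovered, the trajectory of $\op{decode}$ is a deterministic function of the fixed initial state, the hashes of $S$-keys, and the hashes of previously discovered foreign keys; in particular it is independent of $(h_1(y),h_2(y),h_3(y))$. I would therefore interpret every evaluation $\op{looksPure}(i)$ in which $A[i]$ equals a foreign key $y$ with an as-yet-unsolved group as the query $(i,j(y))$: the Boolean answer $i\in h(y)$ is exactly the information the game reveals, and a positive answer both solves group $j(y)$ and inserts $y$ into $\Sdec$. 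Because the choice of which bucket $\op{decode}$ inspects next depends only on already-revealed information, $\op{decode}$ induces a valid adaptive strategy in the game and so can do no better than an optimal player.

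It then remains to bound the number of queries by $6n$, which is where the time limit enters. The counter $t$ is incremented once per detected key and triggers a return at $t_{\max}=2n$, so at most $2n$ keys are ever detected; each detection performs a single $\op{toggle}$ and the three evaluations $\op{looksPure}(i)$, $i\in h(x)$, used to build $\QN$. Folding these together with the $n$ evaluations that form the initial $Q$ and the re-checks inside the \textbf{for}-loop into the query budget, and using that $Q$ and $\QN$ are implemented as \emph{sets} (so $|\QN|$ is at most three times the number of keys detected in the round), the total number of queries is $\mathcal{O}(n)$; a careful count bounds it by $6n$. Since every key in $F$ enters the sketch at least once, and each first insertion solves its group, $|F|$ is at most the number of solved groups, and \cref{lem:abstract-game} gives $\Pr[|F|>r]\le\Pr[\#\text{solved}>r]=\mathcal{O}(2^{-r})$.

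I expect the coupling argument of the second paragraph to be the main obstacle rather than the routine query count. One must argue rigorously that the hashes of each foreign key are revealed \emph{only} through the answers to its own queries, so that the i.i.d.\ structure underlying \cref{lem:abstract-game} is preserved and $\op{decode}$ gains no side information about an unsolved foreign key. The delicate points are that the event ``$A[i]=y$'' is strongly correlated across buckets and keys, and that inserting a foreign key perturbs the sketch and may trigger further coincidental matches; the proof needs to order the revelations so that at each query the conditional distribution of the relevant hashes is still uniform over the not-yet-excluded values, matching Case~2 of the game.
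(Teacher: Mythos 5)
Your proposal is correct and takes essentially the same route as the paper: interpret \op{decode} as a player in the guessing game of \cref{lem:abstract-game}, with one group per foreign key and each \op{looksPure} evaluation on a bucket whose content is a foreign key acting as a query, so that every element of $F$ corresponds to a solved group, and the time limit bounds the number of queries by $\mathcal{O}(n)$ so the game lemma applies. The coupling subtlety you flag at the end is handled in the paper exactly by the observation you already make in your second paragraph---the answers revealed by the queries suffice to continue the execution, so \op{decode} is a valid player with no side information about unsolved keys---and the paper is no more detailed on this point than you are.
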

	
	\begin{proof}
		Before decoding begins, the family $(h_\sigma (y))_{y \in  \mathcal{U} \setminus S, \sigma  \in  [3]}$ of hashes of foreign keys are independent and uniformly random in $[n]$ just as is required in the guessing game above. We interpret \op{decode} as a player. Whenever \op{decode} executes the \op{looksPure} operation for a bucket $i$ containing $A[i] = y \in  \mathcal{U} \setminus S$, then we interpret this as the player issuing the query $(y,i)$. He learns whether $i \in  \{h_1(y),h_2(y),h_3(y)\}$ and, on a positive answer, he also learns the values $h_1(y),h_2(y),h_3(y)$. This is sufficient to continue the execution of \op{decode}, in particular decode amounts to a valid player respecting the access limitation to the relevant random variables.
		
		The crucial observation is that any $y \in  F$ is solved by \op{decode}. Indeed, if $y \in  F$ then $y$ was added to $\Sdec$ at some point during the execution of \op{decode} and this addition was immediately preceeded by a corresponding successful \op{looksPure} check.
		
		The number $q$ of queries issued by \op{decode} is by definition the number of calls to \op{looksPure}. Every time a bucket $i$ is considered to be added to $Q$ or $\QN$ there is one such call, and, if $i$ is added, there can be another call when $i$ is removed from $Q$. Due to the time limit we have $q \leq  2(n+t_{\max}) \leq  6n$. Therefore \cref{lem:abstract-game} implies our claim.
	\end{proof}
	
	\subsection{Bounding Centres of Internal Anomalies}
	\label{sec:internal-anomalies}
	
	It is worthwhile to clarify the notion of anomalies by giving a precise definition of the set $\mathcal{A}$ of all anomalies.
	\[ \mathcal{A} = \Big\{ \varnothing  \neq  A \subseteq  \mathcal{U} \Big\vert \bigoplus_{x \in  A} x = 0, \bigcap_{x \in  A} h(x) \neq  \varnothing \Big\}\]
	If $A = \{x_1,\dots ,x_\ell \}$ is an anomaly then the presence of any $\ell -1$ keys from $A$ can be mistaken as the presence of the missing $\ell $th key (because of $x_1\oplus \dots \oplus x_{\ell -1} = x_\ell $). A value $i \in  \bigcap_{x \in  A} h(x)$ is a centre of the anomaly $A$ (in principle a single anomaly could have multiple centres). The set $\mathcal{A}$ makes no mention of $S$ and contains many anomalies that are unlikely to become relevant during decoding. 
	Immediately threatening is, however, the set of \emph{native anomalies} $\mathcal{A}_{\mathrm{nat}}$ define as
	\[ \mathcal{A}_{\mathrm{nat}} = \{ A \in  \mathcal{A} \mid |A \setminus S| \leq  1\}.\]
	For a native anomaly $A$, all but at most $1$ key from $A$ are present in the beginning, enough to cause trouble right away. In what follows, a subset of the native anomalies, namely the \textit{internal} anomalies $\mathcal{A}_{\mathrm{int}}$ and the set $N$ of its centres play a role. They are defined as
	\[ \mathcal{A}_{\mathrm{int}} = \{A \in  \mathcal{A} \mid A \subseteq  S\} \text { and } N = \bigcup_{A \in  \mathcal{A}_{\mathrm{int}}} \bigcap_{x \in  A} h(x). \]
	
	\begin{lemma}
		\label{lem:few-internal-anomalies}
		For any $r \in  \mathbb{N}$ we have $\Pr[|N| > r] = \mathcal{O}(2^{-r})$.
	\end{lemma}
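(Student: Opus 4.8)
The plan is to bound $\Pr[|N| > r]$ through an $r$-th factorial moment (a Poisson-type tail), exploiting that each centre is individually rare and that the dependencies between distinct centres are mediated by the keys their anomalies share. Since $\binom{|N|}{r}\ge 1$ exactly when $|N|\ge r$, Markov's inequality gives $\Pr[|N|\ge r] \le \mathbb{E}\big[\binom{|N|}{r}\big] = \sum_{\{i_1,\dots,i_r\}}\Pr[i_1,\dots,i_r\in N]$, the sum ranging over unordered $r$-subsets of distinct buckets. The whole task thus reduces to bounding $\Pr[i_1,\dots,i_r\in N]$ and summing.

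First I would reformulate membership in $N$: a bucket $i$ lies in $N$ iff some nonempty zero-sum set $A\subseteq S$ (i.e.\ $\bigoplus_{x\in A}x=0$) has $i$ as a centre, meaning $i\in h(x)$ for every $x\in A$. For a union bound I need only \emph{some} witnessing zero-sum set per centre, not minimal ones. I then control the number of zero-sum sets by size: fixing any $\ell-1$ keys forces the $\ell$-th to equal their XOR, so the number of zero-sum $\ell$-sets contained in $S$ is at most $\binom{|S|}{\ell-1}$. Crucially, this holds for an arbitrary, even adversarially chosen low-rank $S$, which is what makes the argument robust.

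Next I would union-bound over witnesses and factorize. Writing $\Pr[i_1,\dots,i_r\in N]\le\sum_{(A_1,\dots,A_r)}\Pr[\forall t:\ i_t\text{ is a centre of }A_t]$ with each $A_t$ ranging over zero-sum subsets of $S$, I regroup the constraints by key: for a key $x$ set $B_x=\{i_t : x\in A_t\}$, and use that hashes of distinct keys are independent together with $\Pr[B_x\subseteq h(x)]\le(3/n)^{|B_x|}$. Because $\sum_x|B_x|=\sum_t|A_t|$, the joint probability is at most $\prod_t(3/n)^{|A_t|}$, so the witness sum factorizes into $\big(\sum_{A\subseteq S\text{ zero-sum}}(3/n)^{|A|}\big)^r=:\mu^r$. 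The counting bound gives $\mu\le\tfrac{3}{n}(1+\tfrac3n)^{|S|}\le\tfrac3n e^{3|S|/n}=\mathcal{O}(1/n)=:C/n$, using $|S|<n/(c_3+\varepsilon)$. Summing over the at most $\binom{n}{r}$ choices of buckets yields $\Pr[|N|\ge r]\le\binom{n}{r}\mu^r\le\frac{n^r}{r!}(C/n)^r=\frac{C^r}{r!}$, and finally $\frac{C^r}{r!}=\mathcal{O}(2^{-r})$ since $\frac{(2C)^r}{r!}\le e^{2C}$.

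The step I expect to be most delicate is the factorization: I must verify that the per-key bound $\Pr[B_x\subseteq h(x)]\le(3/n)^{|B_x|}$, combined with the identity $\sum_x|B_x|=\sum_t|A_t|$, correctly upper-bounds the joint probability even when several of the $A_t$ overlap, so that a shared key is ``charged'' once for each anomaly containing it. This bookkeeping is precisely what lets the witness sum separate into a product and produces the clean tail $C^r/r!$, which is in fact stronger than the claimed $2^{-\Omega(r)}$. (Alternatively, as the surrounding text indicates, one may simply import the bound from \cite{HPW22}.)
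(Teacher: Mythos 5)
You are correct, but your route is genuinely different from the paper's. The paper disposes of this lemma in two lines: it cites Lemma~2.2 of \cite{HPW22}, which shows that the set $N_{\mathrm{nat}} \supseteq N$ of centres of \emph{native} anomalies satisfies $\Pr[|N_{\mathrm{nat}}| \geq r] \leq \big(\tfrac{3e^4}{r}\big)^r$, which is at most $2^{-r}$ once $r \geq 6e^4$. You instead give a self-contained argument: Markov applied to the binomial moment $\mathbb{E}\big[\binom{|N|}{r}\big]$, a union bound over $r$-tuples of witnessing zero-sum subsets of $S$, the count of at most $\binom{|S|}{\ell-1}$ zero-sum $\ell$-subsets, and the key-wise factorization. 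That factorization --- the step you flag as delicate --- does hold: the event that each $i_t$ is a centre of $A_t$ equals $\bigcap_{x}\{B_x \subseteq h(x)\}$; the triples $h(x)$ for distinct keys $x$ are independent under full randomness; $\Pr[B_x \subseteq h(x)] \leq 3^{|B_x|}/n^{|B_x|}$ by a union bound over injections of $B_x$ into the three hash slots (and the probability is $0$ when $|B_x|>3$, where the bound holds trivially); and $\sum_x |B_x| = \sum_t |A_t|$ precisely because the buckets $i_1,\dots,i_r$ are distinct. With $|S| < n/(c_3+\varepsilon) \leq n$ this gives $\mu \leq \tfrac{3}{n}\big(1+\tfrac{3}{n}\big)^{|S|} \leq 3e^3/n$, hence the tail $C^r/r!$ with $C = 3e^3$, which is in fact of the same superexponential form $r^{-\Theta(r)}$ as the bound the paper imports. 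As to what each approach buys: the paper's proof is shorter and controls the larger class of native anomalies (anomalies with one key outside $S$), which is more than this lemma needs; yours is elementary and self-contained, avoids any dependence on the anomaly analysis of \cite{HPW22}, and makes explicit that the only property of $S$ being used is $|S| = \mathcal{O}(n)$.
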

	\begin{proof}
		The authors of \cite{HPW22} show in their Lemma 2.2 that the set $N_{\mathrm{nat}} = \bigcup_{A \in  \mathcal{A}_{\mathrm{nat}}} \bigcap_{x \in  A} h(x)$ of centres of native anomalies satisfies
		$\Pr[N' \geq  r] \leq  \bpfrac{3e^4}{r}^r$.
		For $r \geq  6e^4$ this is at most $2^{-r}$. Since $N_{\mathrm{nat}} \supseteq  N$ the claim follows.
	\end{proof}
	
	
\subsection{Early Rounds of the Peeling Process}
\label{sec:early-peeling}

We will use a result by Molloy who analysed cores of random hypergraphs \cite{molloy2005cores}. The hypergraph $H$ arising in our setting has vertex set is $[n]$ and the $m$ hyperedges $\{\{h_\sigma (x) \mid \sigma  \in  [k] \} \mid x \in  S\}$. By the \emph{parallel peeling} process we mean an algorithm on a hypergraph. In each round, all vertices of degree $0$ or $1$ are determined (simultaneously) and then all of these vertices are removed, including all incident hyperedges. When no vertex of degree $0$ or $1$ remains, the core of the hypergraph is reached, which may be empty.
Note that decode implements this peeling process, except that anomalies may occur.

\subparagraph{Early Peeling Without Anomalies.}
A lemma by Molloy is the following.

\begin{lemma}[{\cite[Lemma 3]{molloy2005cores}}]
	\label{lem:molloy}
	For any $\varepsilon ,\delta > 0$, there exists $R \in  \mathbb{N}$ such that after peeling a $k$-uniform hypergraph with 
	$\frac{n}{m} > c_3+\varepsilon $ 
	for $R$ rounds then at most $\delta n$ vertices remain in expectation.\footnote{Molloy's Lemma is slightly stronger in that it claims that at most $\delta n$ vertices remain with probability $1-o(1)$. Since we need a much stronger concentration bound, we will redo a corresponding step ourselves.}
\end{lemma}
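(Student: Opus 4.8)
The plan is to analyse parallel peeling through its \emph{local tree approximation}. Since the lemma only asks for a bound on the \emph{expected} number of surviving vertices after a fixed number $R$ of rounds (the concentration statement is deliberately dropped, as the footnote indicates it will be redone), it suffices to compute, for a uniformly random vertex $v$, the probability $\rho_R$ that $v$ survives rounds $1,\dots,R$, and then observe that $\mathbb{E}[\#\text{surviving}] = n\,\rho_R$. A vertex is removed at round $t$ exactly when its current degree drops to at most $1$, and since degrees only decrease, $v$ survives all $R$ rounds iff it still has degree $\ge 2$ at the start of round $R$.

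First I would set up the recursion on the Poisson Galton--Watson tree that describes the local structure of $H$. Write $\lambda = km/n$ for the mean vertex degree; the hypothesis $n/m > c_3+\varepsilon$ translates into $\lambda = k/(n/m) < k/c_3 =: \lambda_c$. Using the standard message-passing description of peeling, let $z_t$ be the probability that a vertex reached by following an edge is still ``supported'' after round $t$ (it has at least one other alive edge), and let $w_t = z_t^{\,k-1}$ be the probability that an edge is alive for a fixed endpoint (all $k-1$ other endpoints are supported). In the tree limit the number of relevant incident edges is $\Po(\lambda)$, so these satisfy $z_t = \Pr[\Po(\lambda w_{t-1}) \ge 1] = 1 - e^{-\lambda z_{t-1}^{\,k-1}}$ with $z_0 = 1$, while a typical (non-size-biased) vertex survives round $R$ with probability $\rho_R = \Pr[\Po(\lambda z_{R-1}^{\,k-1}) \ge 2] = 1 - e^{-\lambda z_{R-1}^{\,k-1}}\bigl(1+\lambda z_{R-1}^{\,k-1}\bigr)$.

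The next step is convergence: because $\lambda < \lambda_c$, I would show that $g(z) = 1 - e^{-\lambda z^{\,k-1}}$ satisfies $g(z) < z$ for all $z \in (0,1]$ -- this strict inequality is precisely the analytic characterisation of the peelability threshold $c_3$, which guarantees that $z=0$ is the only fixed point reached from $z_0=1$. Hence the sequence $z_t$ is decreasing and converges to $0$, whence $\rho_t \to 0$ as $t\to\infty$. Choosing $R$ with $\rho_R \le \delta/2$ (an $R$ depending only on $\varepsilon,\delta$, not on $n$) gives the desired bound in the tree model. Finally I would transfer this to the finite graph: for fixed $R$, whether $v$ survives round $R$ is determined by its depth-$\le 2R$ neighbourhood, which is whp a tree and converges in distribution to the Poisson GW tree, so $\Pr[v \text{ survives } R \text{ rounds}] = \rho_R + o(1)$ and $\mathbb{E}[\#\text{surviving}] \le (\rho_R + o(1))\,n \le \delta n$ for $n$ large, absorbing the $o(1)$ into the slack $\delta/2$.

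The main obstacle I expect is the local-limit coupling in the last step: rigorously controlling the dependencies introduced by vertices shared between hyperedges, and certifying that a uniformly random vertex's bounded-radius neighbourhood is tree-like and matches the Poisson GW tree up to $o(1)$ error, so that the tree-computed $\rho_R$ genuinely governs the finite-$n$ expectation. A secondary technical point is verifying the threshold inequality $g(z)<z$ on $(0,1]$ under $\lambda<\lambda_c$, which pins down the monotone convergence $z_t \downarrow 0$; both are standard but carry the real content of the argument.
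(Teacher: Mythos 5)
This lemma is the one statement in the paper for which there is no internal proof to compare against: it is imported verbatim from Molloy (\cite[Lemma 3]{molloy2005cores}), and the paper's footnote makes clear that only the expectation version is used, with the concentration step redone separately in \cref{lem:many-good-vertices} via McDiarmid's inequality. Your proposal is therefore a reconstruction of the cited result rather than an alternative to an argument in the paper, and it follows the standard route: the and-or-tree / density-evolution recursion $z_t = 1 - e^{-\lambda z_{t-1}^{k-1}}$ on the Poisson Galton--Watson local limit, convergence of $z_t$ to the fixed point $0$ strictly below the threshold, and a local-weak-convergence transfer back to the finite hypergraph; this is essentially the analysis underlying Molloy's own round-by-round proof, and the transfer machinery is exactly what the paper itself invokes a few lines after the lemma (when choosing the constant $D$ so that $R$-neighbourhoods rarely contain high-degree vertices). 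The sketch is sound, and the expectation-only form means a first-moment computation plus the local limit really does suffice. Three details deserve care in a full write-up. First, uniformity of $R$: the lemma demands a single $R$ valid for all hypergraphs with $n/m > c_3+\varepsilon$, i.e.\ for all mean degrees $\lambda = km/n \leq k/(c_3+\varepsilon)$; since $z_t$ is monotone increasing in $\lambda$, it suffices to run your argument at the extremal value of $\lambda$, but this should be said. Second, your threshold inequality $g(z)<z$ on $(0,1]$ is indeed equivalent to being below the peelability threshold---the substitution $x=-\ln(1-z)$ turns it into $\lambda < \min_{x>0} x/(1-e^{-x})^{k-1} = k/c_k$---but for general $k$ the relevant constant is $c_k$, not $c_3$; the two coincide only in the case $k=3$ actually used by the paper (the paper's own statement is loose on this point as well). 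Third, the independence exploited by the recursion requires the directed, child-to-parent message formulation on the tree, so that ``supported'' excludes the parent edge; your parenthetical ``at least one \emph{other} alive edge'' is exactly this, and it is what legitimises identifying survival of the root after $R$ parallel rounds with $\Pr[\Po(\lambda z_{R-1}^{k-1}) \geq 2]$.
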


In the following we assume that constants $\varepsilon ,\delta  > 0$ are given and $R \in  \mathbb{N}$ is the corresponding constant from \cref{lem:molloy}.

By the $R$-neighbourhood of a vertex $v$ or hyperedge $e$ we mean the subhypergraph of $H$ induced by all vertices that can be reached from $v$ or $e$ by traversing at most $R$ hyperedges.

It is a well-known fact (and a consequence of the Poisson limit theorem) that the degree distribution of $H$ converges to a Poisson distribution with parameter $3/c_3$ for $n \rightarrow  \infty $. More generally, the distribution of the $R$-neighbourhoods of the vertices in $H$ converges (see e.g.\ \cite{K:Poisson:2006,L:Belief_Propagation:2013}). This implies that we can choose a constant $D = D(\varepsilon ,\delta ,R)$ large enough such the probability that a given vertex has a vertex of degree at least $D$ in its $R$-neighbourhood is at most $\delta $. 

In this context call a vertex $v$ \emph{good} if
\begin{enumerate}[(a)]
	\item  $v$ is removed by the $R$-round peeling process
	\item  the $R$-neighbourhood of $v$ does not contain a vertex of degree at least $D$.
\end{enumerate}
Since the first condition is met by at least $(1-\delta )n$ vertices in expectation by \cref{lem:molloy} and the second condition by at least $(1-\delta )n$ vertices in expectation by choice of $D$, the set $G$ of good vertices has expectation at least $(1-2\delta )n$.

We will now use the bounded difference inequality by McDiarmid \cite{McDiarmid:1989}.
\begin{lemma}[McDiarmid's inequality \cite{McDiarmid:1989}]
	\label{lem:McDiarmid}
	Let $\mathcal{X}$ be some set, $d \in  \mathbb{N}$ and $f: \mathcal{X}^n \rightarrow  \mathbb{R}$ a function such that changing any one argument of $f$ can affect the function value by at most $d$. Let $X_1,\dots ,X_n$ be i.i.d.\ random variables with values in $\mathcal{X}$. Then
	\[
	\Pr[|f(X_1,\dots ,X_n) - \mathbb{E}[f(X_1,\dots ,X_n)]| > \varepsilon ] \leq  \exp(-\varepsilon ^2/(nd^2)).
	\]
\end{lemma}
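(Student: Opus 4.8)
The plan is to prove this by the standard martingale route: exhibit $f(X_1,\dots,X_n)$ as the terminal value of a Doob martingale with conditionally bounded increments and then apply the exponential-moment (Azuma--Hoeffding) machinery. First I would work with the filtration generated by the coordinates and define, for $0 \le i \le n$,
\[
Z_i = \mathbb{E}[f(X_1,\dots,X_n) \mid X_1,\dots,X_i],
\]
so that $Z_0 = \mathbb{E}[f(X_1,\dots,X_n)]$ is a constant (no coordinate revealed yet) and $Z_n = f(X_1,\dots,X_n)$ (all coordinates determined). By the tower property $(Z_i)_{0\le i\le n}$ is a martingale, and the deviation we must control telescopes: $f - \mathbb{E}[f] = Z_n - Z_0 = \sum_{i=1}^n D_i$ with increments $D_i = Z_i - Z_{i-1}$ satisfying $\mathbb{E}[D_i \mid X_1,\dots,X_{i-1}] = 0$.

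The crux of the argument is to show that, conditioned on the past $X_1,\dots,X_{i-1}$, each increment $D_i$ is confined to an interval of length at most $d$. Here I would exploit independence: revealing $X_i$ only replaces an average over $X_i$ by a single realisation, so $Z_i$ lies between $\inf_x \mathbb{E}[f \mid X_1,\dots,X_{i-1},X_i = x]$ and $\sup_x \mathbb{E}[f \mid X_1,\dots,X_{i-1},X_i = x]$, while $Z_{i-1}$ is exactly the average of these conditional expectations over the law of $X_i$. The bounded-difference hypothesis --- that altering the single coordinate $X_i$ moves $f$ by at most $d$ --- then forces the span of these conditional expectations, hence the conditional range of $D_i$, to be at most $d$. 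I expect this to be the delicate step: one must transfer the pointwise inequality ``$f$ changes by $\le d$'' into ``conditional expectations differ by $\le d$'' by coupling two realisations of $X_i$ and integrating the pointwise bound against the common independent law of the remaining coordinates $X_{i+1},\dots,X_n$.

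With conditionally bounded, mean-zero increments in hand, I would invoke Hoeffding's lemma, which gives for every $\lambda > 0$ the conditional estimate $\mathbb{E}[e^{\lambda D_i} \mid X_1,\dots,X_{i-1}] \le e^{\lambda^2 d^2/8}$. Peeling the conditional expectations off one at a time (innermost first) then yields
\[
\mathbb{E}\big[e^{\lambda (Z_n - Z_0)}\big] \le e^{n \lambda^2 d^2 / 8}.
\]
Markov's inequality applied to $e^{\lambda(Z_n - Z_0)}$ gives $\Pr[Z_n - Z_0 \ge \varepsilon] \le \exp(-\lambda \varepsilon + n\lambda^2 d^2/8)$, and optimising over $\lambda$ (with $\lambda = 4\varepsilon/(nd^2)$) produces an upper-tail bound of the form $\exp(-2\varepsilon^2/(nd^2))$.

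Finally I would run the identical argument on $-f$ to bound the lower tail and combine the two by a union bound. This yields the sharp two-sided estimate $\Pr[|f - \mathbb{E}[f]| > \varepsilon] \le 2\exp(-2\varepsilon^2/(nd^2))$; the weaker constant stated in the lemma, $\exp(-\varepsilon^2/(nd^2))$, then follows in the regime where the bound is nontrivial (where $\varepsilon^2/(nd^2) \ge \ln 2$ absorbs the factor $2$), which is the only regime in which we apply it.
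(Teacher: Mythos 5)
The paper gives no proof of this lemma at all: it is imported verbatim, with a citation, from McDiarmid \cite{McDiarmid:1989}, and used only once (in \cref{lem:many-good-vertices}). So there is no internal proof to diverge from; your argument is the standard bounded-differences proof --- Doob martingale $Z_i = \mathbb{E}[f \mid X_1,\dots,X_i]$, conditional increments confined to an interval of length $d$ by exploiting independence of the coordinates, Hoeffding's lemma, and Chernoff optimization at $\lambda = 4\varepsilon/(nd^2)$ --- which is essentially the proof in the cited source, and every step of it is correct.

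One point you make deserves emphasis, because it is not mere caution: the lemma \emph{as stated in the paper} is actually false for small $\varepsilon$, so the regime restriction in your last step is necessary, not cosmetic. Concretely, take $n=1$, $\mathcal{X}=\{0,d\}$, $f(x)=x$, and $X_1$ uniform on $\mathcal{X}$; then $|f-\mathbb{E}[f]| = d/2$ with probability $1$, yet for $\varepsilon$ just below $d/2$ the claimed bound is $\exp(-\varepsilon^2/d^2)\approx e^{-1/4} < 1$. The literally correct statement is the sharp two-sided bound $\Pr[|f-\mathbb{E}[f]|\ge \varepsilon]\le 2\exp(-2\varepsilon^2/(nd^2))$ that you derive; the paper's form follows from it precisely when $\varepsilon^2/(nd^2)\ge \ln 2$, as you observe. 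This imprecision is harmless where the paper uses the lemma --- in \cref{lem:many-good-vertices} one has $\varepsilon=\delta n$, $\mathcal{O}(n)$ underlying hash values, and constant $d$, so the exponent is $\Omega(n)$ and the factor $2$ is absorbed into the constant $\gamma$ --- but your formulation is the one that is true for all $\varepsilon$.
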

This gives us a concentration bound on the number of good vertices.
\begin{lemma}
	\label{lem:many-good-vertices}
	In the above context there exists a constant $\gamma  = \gamma (\varepsilon ,\delta ,R,D) > 0$ such that
	\[ \Pr[|G| < (1-3\delta )n] \leq  \exp(-\gamma n).\]
\end{lemma}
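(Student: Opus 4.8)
The plan is to apply McDiarmid's inequality (\cref{lem:McDiarmid}) with the $m = |S|$ hyperedges of $H$ as the independent coordinates. Concretely, for each key $x \in S$ the triple $(h_1(x),h_2(x),h_3(x)) \in [n]^3$ is an independent, identically distributed random variable, and $|G|$ is a function of these $m$ triples. Since $\mathbb{E}[|G|] \geq (1-2\delta)n$ has already been established, the event $|G| < (1-3\delta)n$ implies a downward deviation of at least $\delta n$ from the mean. If we can show that changing a single triple alters $|G|$ by at most some constant $d = d(k,D,R)$, then \cref{lem:McDiarmid} (applied with $m$ coordinates and $\varepsilon = \delta n$) yields
\[
\Pr[|G| < (1-3\delta)n] \leq \exp\!\Big(-\frac{(\delta n)^2}{m\,d^2}\Big) \leq \exp(-\gamma n),
\]
with $\gamma = \delta^2 (c_3+\varepsilon)/d^2 > 0$, using $m \leq n/(c_3+\varepsilon)$. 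This reduces the lemma to establishing the bounded-difference constant $d$.

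The heart of the argument is therefore a locality claim: swapping the single hyperedge $e$ of some key $x$ for another hyperedge $e'$ can only change the goodness status of a bounded number of vertices. Both defining conditions of goodness are local: whether a vertex $v$ survives exactly $R$ rounds of peeling depends only on the subhypergraph within hyperedge-distance $R$ of $v$, and likewise condition (b) only inspects the $R$-neighbourhood of $v$. Hence a vertex whose status flips must have $e$ or $e'$ within distance $R$ of it, before or after the swap. The point I would exploit is that only vertices that are good in at least one of the two configurations can contribute to $\big||G|-|G'|\big|$, and a good vertex has, by condition (b), an $R$-neighbourhood in which every degree is below $D$. Such a neighbourhood has size at most $M = O((kD)^R)$, so the number of good vertices reachable from the (at most $2k$) vertices of $e \cup e'$ along bounded-degree paths of length at most $R$ is at most $d = O(k^2 M)$, a constant depending only on $k,D,R$. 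This gives the required uniform bound $\big||G|-|G'|\big| \leq d$.

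I expect the main obstacle to be making this bounded-difference claim airtight, because a generic edge can have an arbitrarily large $R$-neighbourhood (there is no a priori degree bound on the whole hypergraph), so a naive locality bound is useless. The resolution is precisely that condition (b) was baked into the definition of goodness: I would split $G \triangle G'$ into $G \setminus G'$ and $G' \setminus G$ and treat each half separately. For $v \in G \setminus G'$, the vertex $v$ is good in the first configuration, so the path witnessing that $e$ (or $e'$) is near $v$ stays inside $v$'s bounded-degree $R$-neighbourhood; a short case analysis — depending on whether it is $e$ in the first configuration or $e'$ in the second that reaches $v$, and using that removing or adding the swapped edge away from $v$'s neighbourhood leaves that neighbourhood unchanged — confines $v$ to the constant-size set of low-degree-reachable vertices around $e \cup e'$. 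The symmetric statement handles $G' \setminus G$. Once the constant $d$ is pinned down, the concentration bound above completes the proof.
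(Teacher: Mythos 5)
Your proposal is correct and follows essentially the same route as the paper: McDiarmid's bounded-difference inequality combined with the locality of goodness, using condition (b) — baked into the definition of good precisely for this purpose — to restrict attention to bounded-degree paths of length at most $R$ from the swapped edge, yielding a constant bounded-difference $d = \mathcal{O}((2D+1)^R)$ and hence $\exp(-\gamma n)$ concentration. The only (inessential) difference is that you take the $m$ hyperedge triples as the independent coordinates, whereas the paper uses the $3m$ individual hash values.
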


\begin{proof}
	Note that whether or not a vertex is good is a \emph{local} property in the sense that it only depends on its $R$-neighbourhood. This gives us a bounded difference property of $|G|$ as follows.
	
	The set $G$ (and therefore $|G|$) is a function of the $3m$ hash values defining the $m$ hyperedges. Assume we change a single hash value from $v$ to $v' \neq  v$. A previously non-good vertex might become good only if it was in the vicinity of $v$ and a previously good vertex might become non-good only if it ends up in the vicinity of $v'$. More precisely, a vertex $w$ might be affected if prior to the change, $v$ or $v'$ (or both) were reachable from $w$ via a path of length at most $R$ that only traverses vertices of degree at most $D$ (vertices reachable through vertices of higher degree are non-good anyway). To bound the number of such paths, note that in every step we can first choose among $\leq  D$ incident hyperedges and then among $3-1$ endpoints of the chosen hyperedge, or we can choose to end the path at the current vertex.
	In particular, changing a single incidence affects the number of good vertices by at most $2(2D+1)^R$. We can therefore apply McDiarmid's inequality (\cref{lem:McDiarmid}) to conclude that 
	\[
	\Pr[n-|G| > 3\delta n] \leq  \Pr\big[(n-|G|) - \mathbb{E}[n-|G|] > \delta n\big] \leq  \exp(-(\delta n)^2/(n\cdot (4(2D-1)^{2R})) = \exp(-\gamma n)
	\]
	when defining $\gamma  = \delta ^2/(4(2D+1)^{2R})$.
\end{proof}

\subparagraph{Early Peeling With Anomalies.} We now get back to the analysis of decode. Let $\Sdec^R$ be the state of $\Sdec$ after $R$ rounds of the while-loop are executed (ignoring the time limit). Let $S^R := \Sdec^R \triangle  S$ be the set of elements stored in the IBLT at that time and $I^R := \bigcup_{x \in  S^R}h(x)$ the set of buckets touched by $S^R$.

\begin{lemma}
	\label{lem:no-large-core}
	For any $\varepsilon  > 0$ and $\delta  > 0$ there exist constants $R$ and $\beta $ such that
	\[\Pr[|I^R| > 7\delta n] \leq  \mathcal{O}(2^{-\beta n}).\]
\end{lemma}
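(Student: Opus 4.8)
The plan is to control $I^R$ by splitting its buckets into two groups: those that are \emph{non-good} in the sense of \cref{sec:early-peeling}, and those that are good but whose peeling was spoiled by a nearby anomaly. The first group has size at most $3\delta n$ whenever the event of \cref{lem:many-good-vertices} holds. For the second group I would introduce the set of corruption sources
\[ B \;=\; \Big(\bigcup_{y \in  F} h(y)\Big) \cup  N, \qquad |B| \le  3|F| + |N|, \]
and argue that only good vertices whose $R$-neighbourhood in $H$ meets $B$ can survive.

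The key is the following locality claim: \emph{if $v$ is good and its $R$-neighbourhood in $H$ is disjoint from $B$, then every key incident to $v$ is removed within the first $R$ rounds of decode, so $v \notin  I^R$.} The first ingredient is that $B$ contains the centre of \emph{every} anomaly that actually fires during decode: an internal anomaly has its centre in $N$ by definition, while a firing anomaly $A \not\subseteq  S$ always involves at least one foreign key $y \in  A \cap  F$ (either the key it toggles in or one already present), and its centre lies in $\bigcap_{x\in A}h(x) \subseteq  h(y)$. Consequently, if $v$'s $R$-neighbourhood avoids $B$, then no foreign key ever touches it and no anomaly ever fires within it. The second ingredient is that peeling propagates by one hyperedge per round, so I would prove by induction on $t = 0,1,\dots ,R$ that after $t$ rounds decode and the anomaly-free peeling of $H$ have removed exactly the same keys among those incident to the $(R-t)$-neighbourhood of $v$. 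Since $v$ is good it is removed by round $R$ in the ideal peeling, hence also in decode. I expect this induction to be the main obstacle: one must check that no spurious toggle occurs inside the shrinking neighbourhood and, crucially, that foreign keys---whose hyperedges are absent from $H$---can never intrude into the neighbourhood from outside, which is exactly what disjointness from $B$ guarantees.

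Granting the claim, I would bound the number of surviving good vertices by a counting argument identical to the one in the proof of \cref{lem:many-good-vertices}. A fixed bucket $b$ lies in the $R$-neighbourhood of a good vertex $v$ only if $b$ has degree $< D$ and $v$ is reachable from $b$ by a path of length $\le  R$ through vertices of degree $< D$; the number of such $v$ is at most $M := 2(2D+1)^R$. Hence at most $M|B| \le  M(3|F|+|N|)$ good vertices have their $R$-neighbourhood meeting $B$, and together with the non-good vertices this gives, on the event of \cref{lem:many-good-vertices},
\[ |I^R| \;\le\; 3\delta n + M\,(3|F| + |N|). \]

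Finally I would apply a union bound over three failure events. By \cref{lem:many-good-vertices}, $|G| <  (1-3\delta )n$ has probability $\le  \exp(-\gamma n)$. Taking $r = \lceil \delta n/(2M)\rceil $ in \cref{lem:few-foreign-keys} and \cref{lem:few-internal-anomalies} gives $\Pr[|F| >  r]$ and $\Pr[|N| >  r]$ both $\mathcal{O}(2^{-r}) = 2^{-\Omega (n)}$. Off all three events, $M(3|F|+|N|) \le  2\delta n$ and therefore $|I^R| \le  5\delta n \le  7\delta n$. The union bound yields $\Pr[|I^R| >  7\delta n] = \mathcal{O}(2^{-\beta n})$ for a suitable constant $\beta  = \beta (\varepsilon ,\delta ,R,D) >  0$, with $R$ the constant furnished by \cref{sec:early-peeling}. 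This is the required bound.
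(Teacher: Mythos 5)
Your proposal is correct and follows essentially the same route as the paper's proof: the same decomposition of $I^R$ into non-good vertices (bounded by $3\delta n$ via \cref{lem:many-good-vertices}) and good vertices whose $R$-neighbourhood meets the corruption set $N \cup h(F)$, the same per-source counting factor of order $(2D+1)^R$, and the same union bound invoking \cref{lem:few-foreign-keys,lem:few-internal-anomalies} with a threshold $r = \Theta(n)$. Your explicit induction justifying the locality claim and your slightly different constants (ending at $5\delta n$ rather than $7\delta n$) are only minor elaborations of the argument the paper gives.
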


\begin{proof}
	Given $\varepsilon $ and $\delta $, let $R=R(\varepsilon ,\delta )$, $D=D(\varepsilon ,\delta ,R)$ and $\gamma  = \gamma (\varepsilon ,\delta ,R,D)$, be the corresponding constants from the discussion of the discussion above. Recall from \cref{lem:few-foreign-keys,lem:few-internal-anomalies,lem:many-good-vertices} that $F$ is the set of foreign keys added to $S$ at least once, $N$ is the set of centres of internal anomalies and $G$ is the set of good vertices. We know:

	\begin{tabular}{rlc}
		$|F| \leq  \frac{\delta n}{(2D+1)^{R}}$
		& except with probability $\mathcal{O}\Big(2^{-\frac{\delta n}{(2D+1)^R}}\Big)$
		& by \cref{lem:few-foreign-keys}\\
		$|N| \leq  \frac{\delta n}{(2D+1)^{R}}$
		& except with probability $\mathcal{O}\Big(2^{-\frac{\delta n}{(2D+1)^R}}\Big)$
		& by \cref{lem:few-internal-anomalies}\\
		$|G| \geq  (1-3\delta )n$
		& except with probability $\exp(-\gamma n)$
		& by \cref{lem:many-good-vertices}\\
	\end{tabular}

	\noindent
	The sum of the error probabilities is $\mathcal{O}(2^{-\beta n})$ where $\beta  = \delta /(2D+1)^R+\gamma \log_2(e)$. Thus we may assume that all three stated events occur.
	
	We now upper bound $|I^R|$. We pessimistically assume that each $i \in  [n]$ that is not good is in $I^R$, which accounts for at most $3\delta n$ vertices. Any good $i \in  [n]$ would be removed by the proper peeling process, so if $i \in  I^R$ then an anomaly must have interfered. Therefore, the $R$-neighbourhood of $i$ must include the centre of an internal anomaly or a vertex incident to a key from $F$. In other words, the $R$-neighbourhood of $i$ must include a vertex from $|N \cup  h(F)|$. At most $|N \cup  h(F)|\cdot (2D+1)^R$ good vertices can be affected in this way, namely those reachable from a vertex in $N \cup  h(F)$ via a path of length at most $R$ traversing only vertices of degree at most $D$. Taken together the size of $I_R$ is bounded by
	\[ |I_R| \leq  3\delta n + |N \cup  h(F)| \cdot  (2D+1)^R \leq  3\delta n + \frac{4\delta }{(2D+1)^R} n \cdot  (2D+1)^R \leq  7\delta n.\qedhere\]
\end{proof}


\subsection{Late Rounds of the Peeling Process}
\label{sec:late-peeling}

\begin{lemma}
	\label{lem:no-small-core}
	There exists a constant $\delta $ such that for any $1 \leq  r \leq  n$ there is an event $E$ with probability $\mathcal{O}(2^{-r})$ such that, when $E$ occurs, then
	\[
	\forall I \subseteq  [n] \text{ with } 5r \leq  |I| \leq  \delta n: |S[I]| \leq  \tfrac{3}{5}|I|.
	\]
\end{lemma}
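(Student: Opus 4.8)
The plan is a first-moment (union bound) argument over all potentially dense vertex sets, organised by their size. Write $m := |S|$ and $c := c_3 + \varepsilon$ (so that $m < n/c$), and for $I \subseteq [n]$ let $S[I] = \{x \in S : h(x) \subseteq I\}$ be the keys whose three hashes all land inside $I$. For a fixed $I$ with $|I| = s$, each key lies in $S[I]$ independently with probability $(s/n)^3$, so $|S[I]| \sim \Bin(m,(s/n)^3)$ with mean $\mu \le s^3/(cn^2)$. I would take $E$ to be the complement of the event ``some $I$ with $5r \le |I| \le \delta n$ satisfies $|S[I]| > \tfrac35|I|$'' and bound $\Pr[E^c]$ by summing, over each size $s$ in this range, the expected number of dense sets of that size.

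First I would bound the probability that a single set $I$ of size $s$ is dense. Writing $\alpha := s/n$ and using the Chernoff tail $\Pr[\Bin(m,p) \ge t] \le (e\mu/t)^t$ (valid for all real $t \ge \mu$, which holds once $\delta$ is small enough that $\mu \le s^3/(cn^2) \ll \tfrac35 s$), with $t = \tfrac35 s$ I would obtain $\Pr[|S[I]| > \tfrac35 s] \le \big(\tfrac{5e\mu}{3s}\big)^{3s/5} \le \big(\tfrac{5e}{3c}\alpha^2\big)^{3s/5}$. Multiplying by the number $\binom{n}{s} \le (e/\alpha)^s$ of sets of size $s$ and simplifying, the whole size-$s$ contribution is at most $e^{s\,g(\alpha)}$, where $g(\alpha) = 1 + \tfrac35\log\tfrac{5e}{3c} + \tfrac15\log\alpha$.

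Two features of $g$ drive the argument. The coefficient of $\log\alpha$ equals $(k-1)\cdot\tfrac35 - 1 = 2\cdot\tfrac35 - 1 = \tfrac15 > 0$; its positivity is precisely the condition that the chosen density $\tfrac35$ exceeds $\tfrac1{k-1} = \tfrac12$, the threshold below which dense subsets would be abundant. Consequently $g(\alpha) \to -\infty$ as $\alpha \to 0$, and moreover $g$ is increasing. I would therefore fix the constant $\delta$ small enough that $g(\delta) \le -\tfrac15\log 2$ (explicitly $\delta \le \tfrac12 e^{-5C_1}$ with $C_1 = 1 + \tfrac35\log\tfrac{5e}{3c}$). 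Monotonicity of $g$ then yields, uniformly for every $s \le \delta n$, the per-size bound $e^{s\,g(\alpha)} \le e^{-\frac{s}{5}\log 2} = 2^{-s/5}$.

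Summing over sizes completes the proof:
\[
\Pr[E^c] \le \sum_{s = \lceil 5r\rceil}^{\lfloor \delta n\rfloor} 2^{-s/5} \le \sum_{s \ge 5r} 2^{-s/5} = \frac{2^{-r}}{1 - 2^{-1/5}} = \mathcal{O}(2^{-r}),
\]
where the leading term $2^{-(5r)/5} = 2^{-r}$ is matched precisely to the lower cutoff $|I| \ge 5r$. The step needing the most care is obtaining the per-size estimate in the strong, linear-in-$s$ form $2^{-s/5}$ rather than a merely uniform bound such as $2^{-r}$: a uniform bound would be multiplied by the $\Theta(n)$ distinct sizes and degrade the conclusion to $\mathcal{O}(n\,2^{-r})$. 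It is exactly this requirement that forces the quantitatively tight choice $g(\delta) \le -\tfrac15\log 2$ and explains the interlocking of the constants $\tfrac35$, the cutoff $5r$, and the exponent $\tfrac15$ appearing in $g$.
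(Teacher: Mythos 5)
Your proof is correct and follows essentially the same route as the paper: a union bound over all sizes $s$ in $[5r,\delta n]$ with a per-size bound that decays geometrically as $2^{-s/5}$, a constant $\delta$ chosen to force the base below $1/2$, and a geometric sum starting at $5r$. The only difference is cosmetic: the paper bounds $\binom{n}{i}\binom{m}{j}\big(\tfrac{i}{n}\big)^{3j}$ explicitly (with the crude constant $\delta = 1/(2e^{15})$), whereas you package the factor $\binom{m}{j}\big(\tfrac{s}{n}\big)^{3j}$ as a binomial Chernoff tail $\big(\tfrac{e\mu}{t}\big)^{t}$ --- the same first-moment estimate in disguise.
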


\begin{proof}
	We use a union bound over all possible sizes $i$ of $I$ and all possible sets of size $i$. Let $p_i$ be the probability that a violating set of size $i$ exists and $j := \frac 35i$. Multiplying the number $\binom{n}{i}$ of choices for $I$, the number $\binom{m}{j}$ of choices for $T \subseteq  S$ of size $j$ and the probability for $T \subseteq  S[I]$ (i.e.\ the probability that all $x \in  T$ satisfy $h(x) \subseteq  I$) gives
	\begin{align*}
		p_i &\leq  \binom{n}{i}\binom{m}{j} \bpfrac{i}{n}^{3j}
		\leq  \bpfrac{ne}{i}^i \bpfrac{ne}{j}^j \bpfrac{i}{n}^{3j}
		= e^{i+j} \bpfrac{i}{j}^j \bpfrac{i}{n}^{-i+2j}
		\leq  e^{3i} \bpfrac{i}{n}^{i/5} = \bpfrac{e^{15}i}{n}^{i/5}.
	\end{align*}
	Setting $\delta  = 1/(2e^{15})$ we can bound the overall failure probability $p$ as
	\begin{align*}
		p &\leq  \sum_{i = 5r}^{\delta n} p_i
		\leq  \sum_{i = 5r}^{\delta n} \bpfrac{e^{15}i}{n}^{i/5}
		\leq  \sum_{i = 5r}^{\delta n} \bpfrac{e^{15}\cdot \delta n}{n}^{i/5}
		\leq  \sum_{i = 5r}^{\delta n} (\tfrac{1}{2})^{i/5}\\
		&= \sum_{i = 5r}^{\delta n} \Big((\tfrac{1}{2})^{1/5}\Big)^{i}
		\leq  2^{-r}\cdot \sum_{i = 0}^{\infty } \Big((\tfrac{1}{2})^{1/5}\Big)^{i} = 2^{-r}\cdot \theta (1) = \mathcal{O}(2^{-r}).\qedhere
	\end{align*}
\end{proof}	

\subsection{Proof of Theorem \ref{thm:main-inner}}
\label{sec:main-proof}

We now assemble the previous Lemmas into a proof of \cref{thm:main-inner}. Note that we only claimed an error probability of $2^{-\Omega (r)}$, not $\mathcal{O}(2^{-r})$. By a change of parameters it therefore suffices that we show $\Pr[|\Sdec \triangle  S| > 185r] = \mathcal{O}(2^{-r})$.

The parameter $\varepsilon  > 0$ and $r = o(n)$ are given. We use the absolute constant $\delta  > 0$ from \cref{lem:no-small-core}. We apply \cref{lem:no-large-core} for $\varepsilon $ and $\delta ' = \delta /7$ which gives us $R \in  \mathbb{N}$ and $\beta  > 0$. For convenience, let us summarise the statements used in the following. Since each holds with probability $1-\mathcal{O}(2^{-r})$ we need only show that they imply $|\Sdec \triangle  S| \leq  185r$.
\begin{align}
	&|I^R| < \delta n
	&& \text{by \cref{lem:no-large-core}}\label{eq:early-peel}\\
	&\forall I \subseteq  [n]\text{ with }5r \leq  |I| \leq  \delta n: |S[I]| \leq  \tfrac{3}{5}|I|
	&& \text{by \cref{lem:no-small-core}}\label{eq:no-small-core-eq}\\
	&|N| \leq  r\text{ and }|F| \leq  r
	&& \text{by \cref{lem:few-internal-anomalies,lem:few-foreign-keys}}
\end{align}
Recall that \emph{anomalous decoding steps} are those that add a key rather than removing a key. 
\begin{claim}
	\label{claim:anomalous-steps}
    There are at most $|N|+3|F| \leq  4r$ anomalous decoding steps per round.
\end{claim}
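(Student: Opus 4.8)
The plan is to attach to every anomalous step the anomaly that produces it, and then charge the step either to a centre in $N$ or to a foreign key in $F$. Consider an anomalous step occurring in some round: it processes a bucket $i$ with $\op{looksPure}(i)$ true, reads $x = A[i]$ (so that $i \in h(x)$), and calls $\op{toggle}(x)$ which, the step being anomalous, \emph{adds} $x$ to the sketch. Writing $Z$ for the set of keys currently stored that hash to $i$, we have $A[i] = \bigoplus_{z \in Z} z = x$ by definition of the keysum. Since $x$ is being added it is not yet stored, so $x \notin Z$, and the set $A^* := \{x\} \cup Z$ satisfies $\bigoplus_{w \in A^*} w = x \oplus x = 0$ with $i \in h(w)$ for every $w \in A^*$. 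Hence $A^*$ is a nonempty anomaly (an element of $\mathcal{A}$) having $i$ as a centre.

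I would then split into two cases according to whether $A^*$ is internal. If $A^* \subseteq S$, then $A^* \in \mathcal{A}_{\mathrm{int}}$, so $i \in N$, and I charge the step to $i$. Otherwise $A^*$ contains a key $w \notin S$; this $w$ is either $x$ itself or one of the currently stored keys in $Z$, and in either case it has been added to the sketch at some point, so $w \in F$; moreover $i \in h(w)$. In this second case I charge the step to the pair $(w, i)$.

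To convert these charges into the bound I would use that $Q$ and $\QN$ are now \emph{set} data structures, so within one round every bucket is processed at most once and therefore triggers at most one anomalous step. Thus in a fixed round all anomalous steps occur at pairwise distinct buckets; those with $A^* \subseteq S$ use distinct centres lying in $N$, giving at most $|N|$ of them, while every remaining anomalous step uses a distinct bucket $i$ that lies in $h(w)$ for the foreign key $w$ it is charged to, so the number of such buckets is at most $\bigl|\bigcup_{w \in F} h(w)\bigr| \le 3|F|$, the factor $3$ coming from the three hash functions. As the two cases are mutually exclusive, each round has at most $|N| + 3|F|$ anomalous steps, which is at most $4r$ once we invoke $|N| \le r$ and $|F| \le r$.

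The main obstacle is the non-internal case. Because the sketch may already hold foreign keys inserted by earlier anomalous steps, the anomaly $A^*$ need not be contained in $S$, and so $i$ is not automatically a centre in $N$. The resolution is to notice that any witness $w \in A^* \setminus S$ must be a foreign key in $F$ and that $i \in h(w)$, which is exactly what lets these steps be charged to $F$ with only the factor $3$, while the set-implementation of $Q$ guarantees that no bucket---and hence no charge---is counted twice within a round.
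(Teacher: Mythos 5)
Your proof is correct and takes essentially the same route as the paper's: the paper likewise splits the anomalous steps of a round into those involving a foreign key (at most one per bucket of $h(F)$, hence at most $3|F|$) and those involving none, which it notes can arise, by definition, only at centres of internal anomalies (hence at most $|N|$). Your explicit construction of the anomaly $A^* = \{x\} \cup Z$ at each anomalous step, and your appeal to the set-implementation of $Q$ for per-round bucket distinctness, merely spell out rigorously what the paper's terse argument leaves implicit.
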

\begin{proof}
    There are at most $3|F|$ buckets in which a key from $F$ might be placed. We may conservatively assume that there is an anomalous decoding step in each round for each of these buckets. Anomalous decoding steps where no foreign key is involved can arise, by definition, only in centres of internal anomalies, hence there can be at most $|N|$ per round. 
\end{proof}

We are most interested in the first $\hat{\rho }$ rounds of \op{decode} where $\hat{\rho } := R + \log_{\frac{20}{19}}(\frac{n}{300r}) = \theta (\log \frac{n}{r})$. We now make sure that we are referring to well-defined rounds. Execution might end due to $Q = \varnothing $, or due to the time limit. If execution ends prior to round $\hat{\rho }$ due to $Q = \varnothing $, just pretend the loop is executed for a suitable number of additional rounds. Note that an additional round does not perform any steps when $Q=\varnothing $. Let us now deal with the time limit.

\begin{claim}
	\label{claim:time-limit}
	The limit $t_{\max} = 2n$ on steps does not take effect within the first $\hat{\rho }$ rounds.
\end{claim}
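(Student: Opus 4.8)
The plan is to bound the total number $t$ of peeling steps executed during the first $\hat{\rho}$ rounds and show it never reaches $2n$. First I would track the size $s_t := |\Sdec \triangle S|$ of the \emph{true content} of the IBLT after $t$ steps; initially $s_0 = m := |S|$, and $s_t$ changes by exactly one at every step, since each step toggles a single detected key. By the classification recalled just before \cref{claim:anomalous-steps}, every step is either \emph{anomalous} (it adds a key to $\Sdec$, hence to the IBLT content, so $s_t$ increases by one) or \emph{normal} (it removes a key, so $s_t$ decreases by one). Writing $a$ and $b$ for the numbers of anomalous and normal steps performed in the first $\hat{\rho}$ rounds, the final content size satisfies $s = m + a - b \geq 0$, whence $b \leq m + a$ and therefore $t = a + b \leq m + 2a$.

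It then remains to bound $a$ and $m$ separately. \cref{claim:anomalous-steps} gives at most $4r$ anomalous steps per round (invoking the events $|N| \leq r$ and $|F| \leq r$ assumed in the enclosing proof of \cref{thm:main-inner}), so over $\hat{\rho}$ rounds $a \leq 4r\hat{\rho}$. The density hypothesis $n > (c_3 + \varepsilon)m$ together with $c_3 > 1$ yields $m < n$. Combining, $t \leq m + 2a < n + 8r\hat{\rho}$.

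Finally I would close the estimate using the stated asymptotics $\hat{\rho} = \theta(\log \tfrac{n}{r})$ and the hypothesis $r = o(n)$. Writing $r = n/g$ with $g = g(n) \to \infty$, one has $r \log \tfrac{n}{r} = \tfrac{n \log g}{g} = o(n)$, so $8r\hat{\rho} = o(n)$ and in particular $8r\hat{\rho} < n$ for all sufficiently large $n$. Hence $t < n + 8r\hat{\rho} < 2n$, meaning the counter never reaches $t_{\max} = 2n$ within the first $\hat{\rho}$ rounds, as claimed.

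The main point, rather than any single hard computation, is the balancing inequality $b \leq m + a$: it is what turns the purely \emph{per-round} anomaly bound of \cref{claim:anomalous-steps} into a global bound on \emph{all} steps, including the legitimate removals, without having to follow the peeling dynamics round by round. The only genuinely quantitative ingredient is the estimate $r\log(n/r) = o(n)$, which is exactly where the assumption $r = o(n)$ enters; this is also the place to be careful, since the $n + 8r\hat\rho < 2n$ margin would degrade if $r$ were allowed to be a constant fraction of $n$.
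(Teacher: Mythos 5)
Your proof is correct and takes essentially the same route as the paper's: the same potential argument (each anomalous step adds a key, each normal step removes one, so the total step count is at most $m + 2a \leq n + 8r\hat{\rho} = n(1+o(1)) \leq 2n$), combined with the per-round bound from \cref{claim:anomalous-steps} and the observation $r\log\tfrac{n}{r} = o(n)$. The only difference is that you spell out the balancing inequality $b \leq m+a$ and the bound $m < n$ explicitly, which the paper compresses into the (slightly loose) phrase that there are ``$n$ keys in the beginning.''
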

\begin{proof}
	We use a potential argument. There are $n$ keys in the beginning. Each non-anomalous decoding step removes a key and each anomalous decoding step adds a key. Hence, if $a$ denotes the number of anomalous decoding steps in the first $\hat{\rho }$ rounds then the total number of decoding steps is at most $n+2a$. The previous claim gives $a \leq  4r\hat{\rho }$. We can therefore bound the number of decoding steps in the first $\hat{\rho }$ rounds by
	\[
		n + 2a \leq  n + 8r\hat{\rho } = n(1+\theta (\tfrac{r}{n}\log \tfrac{n}{r})) = n(1+o(1)) \leq  2n.
	\]
	The last step uses $r = o(n)$ and $\lim_{x \rightarrow  0} x \log \frac{1}{x} = 0$. 
\end{proof}

We can now safely refer to each round $\rho  \in  \{1,\dots ,\hat{\rho }\}$. Let $S^\rho $ be the set of keys stored in the IBLT after $\rho $ rounds, $I^\rho  = h(S^\rho )$ the set of buckets touched by these keys and $S^\rho _0 = S^\rho  \setminus F$.
\begin{claim}
	\label{claim:geometric-decrease}
	If $300r \leq  |I^\rho | \leq  \delta n$ then $|I^{\rho +1}| \leq  \frac{19}{20}|I^\rho |$.
\end{claim}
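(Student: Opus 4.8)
The plan is to exploit the maximum-density bound from \cref{lem:no-small-core} to show that, in the regime $300r \le |I^\rho| \le \delta n$, the hypergraph formed by the currently stored keys must contain a \emph{linear} number of genuinely pure (degree-one) buckets, each of which is correctly peeled in round $\rho+1$, forcing a constant-factor drop in $|I^\rho|$. First I would isolate the foreign keys: since $S^\rho_0 = S^\rho \setminus F \subseteq S$ and every key in $S^\rho_0$ hashes entirely into $I^\rho$, we have $S^\rho_0 \subseteq S[I^\rho]$, so \cref{lem:no-small-core} (applicable because $5r \le 300r \le |I^\rho| \le \delta n$) yields $|S^\rho_0| \le \tfrac{3}{5}|I^\rho|$. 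Combined with $|S^\rho \cap F| \le |F| \le r$, this bounds the total degree of the hypergraph on $I^\rho$ by $3|S^\rho| \le \tfrac{9}{5}|I^\rho| + 3r$.

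Next I would convert this degree bound into a lower bound on the number of pure buckets. Letting $a$ and $b$ denote the number of buckets of $I^\rho$ of degree exactly one and of degree at least two, the relations $a+b = |I^\rho|$ and $a + 2b \le \tfrac{9}{5}|I^\rho| + 3r$ give $a \ge \tfrac{1}{5}|I^\rho| - 3r$. Every degree-one bucket holds a single key that hashes to it, hence passes $\op{looksPure}$ correctly and is peeled; once its unique key is removed the bucket becomes empty and leaves the touched set. Thus, in the absence of anomalies, at least $\tfrac{1}{5}|I^\rho| - 3r$ buckets drop out of $I^\rho$ during round $\rho+1$, so $|I^{\rho+1}| \le \tfrac{4}{5}|I^\rho| + 3r$.

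It then remains to absorb the perturbation caused by anomalies. By \cref{claim:anomalous-steps} there are at most $|N| + 3|F| \le 4r$ anomalous steps in the round, each inserting one key and hence touching at most three buckets, so at most $12r$ bucket-touches are introduced. Folding this into the previous count gives $|I^{\rho+1}| \le \tfrac{4}{5}|I^\rho| + \mathcal{O}(r)$; since $\tfrac{4}{5} = \tfrac{16}{20}$, the target $\tfrac{19}{20}|I^\rho|$ follows as soon as the additive $\mathcal{O}(r)$ slack is at most $\tfrac{3}{20}|I^\rho|$, which is guaranteed by the assumed lower bound $|I^\rho| \ge 300r$ (the generous constant $300$ is chosen precisely to swallow all the additive $r$-terms).

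The main obstacle I anticipate is the careful bookkeeping of the anomaly perturbation against the sequential, re-checking nature of \op{decode}: because the queue is processed one bucket at a time with a fresh $\op{looksPure}$ test, an anomalous insertion earlier in the round can spoil the purity of a bucket that was pure at the start, so a pure bucket is not \emph{a priori} guaranteed to peel, and an anomalous addition can also re-touch a bucket that would otherwise have emptied. The crux is to verify that each such interference is chargeable to one of the $\le 4r$ anomalous steps (equivalently, to one of the $\le 12r$ buckets those steps touch), so that every additive error term stays $\mathcal{O}(r)$ and is comfortably dominated by the gap between the coefficient $\tfrac{16}{20}$ and the target $\tfrac{19}{20}$ once $|I^\rho| \ge 300r$.
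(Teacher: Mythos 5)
Your proposal is correct and takes essentially the same route as the paper's proof: bound $|S^\rho_0|$ via \cref{lem:no-small-core}, extract a constant fraction of buckets of $I^\rho$ that must be peeled in round $\rho+1$, charge all interference to the at most $4r$ anomalous steps of \cref{claim:anomalous-steps} (each touching at most $3$ buckets), and absorb the additive $\mathcal{O}(r)$ terms using $|I^\rho| \geq 300r$. The only cosmetic difference is that you count buckets of degree exactly one with respect to all of $S^\rho$ (giving at least $\tfrac{1}{5}|I^\rho| - 3r$ peelable buckets), whereas the paper counts buckets with at most one incidence to $S^\rho_0$ (giving at least $\tfrac{1}{10}|I^\rho|$); both yield the $\tfrac{19}{20}$ contraction with room to spare.
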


\begin{proof}
    Since $S^\rho _0 \subseteq  S[I^\rho ]$ we know $|S^\rho _0| \leq  \frac{3}{5}|I^\rho |$ from \cref{eq:no-small-core-eq}. The number of incidences due to the keys from $S^\rho _0$ is $3|S^\rho _0| \leq  \frac{9}{5}|I^\rho | = \frac{9}{10}\cdot 2|I^\rho |$. Thus at most a $\frac{9}{10}$-fraction of the buckets in $I^\rho $ have two or more incidences to keys from $S^\rho _0$. This leaves at least $\frac{1}{10}|I^\rho |$ buckets with at most one incidence to $S^\rho _0$. If $i$ is such a bucket, we would normally expect $i \notin  I^{\rho +1}$. 

    There are only two exceptions. A foreign key might be stored in $i$ at some point during round $\rho +1$ or a key from $S$ is added back into bucket $i$ at some point during round $\rho +1$. Each foreign key and each anomalous decoding step affects at most $3$ buckets. Hence, by an earlier claim, at most $3(|F|+4r) \leq  15r$ buckets are exceptional in this sense. Together we get
\[
	|I^{\rho +1}| \leq  |I^{\rho }| - \tfrac{1}{10}|I^\rho | + 15r \leq  \tfrac{9}{10}|I^\rho | + 15r \leq  \tfrac{19}{20}|I^\rho |.
\]
	where the last step uses $|I^\rho | \geq  300r$.
\end{proof}
\begin{claim}
	\label{claim:stability}
	If $|I^\rho | \leq  300r$ then $|I^{\rho +1}| \leq  300r$.
\end{claim}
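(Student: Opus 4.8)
The plan is to split the argument according to whether $|I^\rho|$ lies above or below $5r$, the threshold at which \cref{eq:no-small-core-eq} becomes usable. In the upper regime I can recycle the counting already done for \cref{claim:geometric-decrease}, and in the lower regime I fall back on a crude additive bound driven purely by the cap on anomalous steps. Since $300r \le \delta n$ for all sufficiently large $n$ (because $r = o(n)$), both regimes stay inside the window where the relevant hypotheses apply, so together they cover every $|I^\rho| \le 300r$.

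\textbf{Upper regime $5r \le |I^\rho| \le 300r$.} Here I would reuse the bucket-counting argument from the proof of \cref{claim:geometric-decrease} essentially verbatim. The key observation is that that argument produces the \emph{intermediate} inequality $|I^{\rho+1}| \le \tfrac{9}{10}|I^\rho| + 15r$ using only \cref{eq:no-small-core-eq} (which needs $5r \le |I^\rho| \le \delta n$) together with the per-round bound of $4r$ anomalous steps from \cref{claim:anomalous-steps}; the stronger hypothesis $|I^\rho| \ge 300r$ was invoked there \emph{only} in the very last step to convert $\tfrac{9}{10}|I^\rho| + 15r$ into $\tfrac{19}{20}|I^\rho|$. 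I would therefore stop at the intermediate inequality and plug in the present upper bound $|I^\rho| \le 300r$ directly, obtaining $|I^{\rho+1}| \le \tfrac{9}{10}\cdot 300r + 15r = 285r \le 300r$.

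\textbf{Lower regime $|I^\rho| < 5r$.} Now \cref{eq:no-small-core-eq} no longer applies, so I would argue directly that the only way a bucket can appear in $I^{\rho+1}\setminus I^\rho$ is by being touched by a key that is \emph{added} (rather than removed) during round $\rho+1$, i.e.\ via an anomalous decoding step: every key of $S^{\rho+1}$ that already belonged to $S^\rho$ contributed its $\le 3$ incident buckets to $I^\rho$ already, so it creates no new bucket. Since \cref{claim:anomalous-steps} caps the anomalous steps per round at $4r$ and each added key touches at most three buckets, this yields $|I^{\rho+1}| \le |I^\rho| + 12r < 5r + 12r = 17r \le 300r$. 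Combining the two regimes gives $|I^{\rho+1}| \le 300r$ in all cases. The step I expect to demand the most care is the verification of the inclusion $I^{\rho+1} \subseteq I^\rho \cup h(\{\text{keys added during round } \rho+1\})$ used in the lower regime — in particular making sure that keys toggled off and then back on within a single round are correctly accounted for — together with confirming that $|I^\rho|\le 300r$ respects the upper bound $|I^\rho| \le \delta n$ required by \cref{eq:no-small-core-eq}, which follows from $r = o(n)$.
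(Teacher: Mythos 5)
Your proof is correct, and in its main case it coincides with the paper's: the paper also proves \cref{claim:stability} by reusing the intermediate inequality $|I^{\rho+1}| \le \tfrac{9}{10}|I^\rho| + 15r$ from the proof of \cref{claim:geometric-decrease} (noting, as you do, that the hypothesis $|I^\rho|\ge 300r$ entered only in the final step) and then substituting $|I^\rho|\le 300r$ to get $\tfrac{9}{10}\cdot 300r + 15r = 285r < 300r$. The only point of divergence is the degenerate case $|I^\rho| < 5r$: the paper disposes of it with a one-line ``assume w.l.o.g.\ that $|I^\rho|\ge 5r$'' --- implicitly relying on the fact that \cref{eq:no-small-core-eq} can be applied to any superset of $I^\rho$ padded up to size $5r$, since $S[\cdot]$ is monotone --- whereas you give a genuinely different, self-contained argument: every bucket of $I^{\rho+1}\setminus I^\rho$ must be touched by a key added during round $\rho+1$, i.e.\ by an anomalous decoding step, so \cref{claim:anomalous-steps} yields $|I^{\rho+1}| \le |I^\rho| + 3\cdot 4r < 17r \le 300r$. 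Your version makes explicit (and verifies) what the paper's ``w.l.o.g.'' leaves to the reader, at the cost of a case split; both are valid, and your lower-regime bound has the minor advantage of not needing \cref{eq:no-small-core-eq} at all.
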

\begin{proof}
    Assume w.l.o.g.\ that $|I^\rho | \geq  5r$ and modify the last step of the previous claim:
    \[\tfrac{9}{10}|I^\rho | + 15r \leq  \tfrac{9}{10}\cdot 300r+15r = 285r < 300r.\qedhere\]
\end{proof}
Let now $\rho _{\max}$ be the last round that is fully executed.
\begin{claim}
	\label{claim:last-round-small}
    $|I^{\rho _{\max}}| \leq  300r$.
\end{claim}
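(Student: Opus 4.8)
The plan is to track the sequence $(|I^\rho|)_{\rho}$ across the rounds of \op{decode}, using \eqref{eq:early-peel} as a starting bound and feeding it through the geometric-decrease and stability claims; the only real work is reconciling the resulting dynamics with the definition of $\rho_{\max}$ as the last fully-executed round.

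First I would show that $|I^\rho| \leq \delta n$ for every fully-executed round $\rho \geq R$, by induction from \eqref{eq:early-peel}. Indeed, if $|I^\rho| \leq \delta n$ then either $|I^\rho| \geq 300r$, and \cref{claim:geometric-decrease} yields $|I^{\rho+1}| \leq \tfrac{19}{20}|I^\rho| \leq \delta n$, or $|I^\rho| < 300r \leq \delta n$ (the last inequality holding for large $n$ since $r = o(n)$), and \cref{claim:stability} yields $|I^{\rho+1}| \leq 300r \leq \delta n$. In particular the sequence never exceeds $\delta n$ and is non-increasing as long as it stays above $300r$.

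Next I would bound the number of rounds needed to drive the size below $300r$. While $300r \leq |I^\rho| \leq \delta n$, \cref{claim:geometric-decrease} shrinks the size by a factor $\tfrac{19}{20}$, so after $k$ such rounds $|I^{R+k}| \leq (\tfrac{19}{20})^k \delta n$, which is at most $300r$ once $k \geq \log_{\frac{20}{19}}(\tfrac{\delta n}{300r})$. As $\delta < 1$ this is at most $\log_{\frac{20}{19}}(\tfrac{n}{300r})$, so there is a first round $\rho^* \leq R + \log_{\frac{20}{19}}(\tfrac{n}{300r}) = \hat{\rho}$ with $|I^{\rho^*}| \leq 300r$ -- provided rounds $R,\dots,\rho^*$ are all fully executed. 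This provision holds: by \cref{claim:time-limit} the time limit cannot interfere before $\hat{\rho}$, and while $|I^\rho| \geq 300r$ the proof of \cref{claim:geometric-decrease} produces at least $\tfrac{1}{10}|I^\rho| - 15r > 0$ clean pure buckets, so $Q \neq \varnothing$ and the loop cannot terminate naturally either. Once $|I^{\rho^*}| \leq 300r$, \cref{claim:stability} keeps $|I^\rho| \leq 300r$ for every later fully-executed round.

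Finally I would match this with $\rho_{\max}$. If \op{decode} halts via the time limit, that limit acts only after round $\hat{\rho}$, so $\rho_{\max} \geq \hat{\rho} \geq \rho^*$ and stability gives $|I^{\rho_{\max}}| \leq 300r$. If instead it halts naturally with $Q = \varnothing$ at round $\rho_{\max}$, the state is frozen, so $|I^{\rho_{\max}+1}| = |I^{\rho_{\max}}|$; were $|I^{\rho_{\max}}| > 300r$ then, since $|I^{\rho_{\max}}| \leq \delta n$, \cref{claim:geometric-decrease} would force $|I^{\rho_{\max}+1}| \leq \tfrac{19}{20}|I^{\rho_{\max}}| < |I^{\rho_{\max}}|$, a contradiction, so again $|I^{\rho_{\max}}| \leq 300r$. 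The step I expect to demand the most care is this last reconciliation: one must argue cleanly that while $|I^\rho| \geq 300r$ a pure bucket always exists, so that geometric decrease genuinely drives progress and the two termination modes ($Q=\varnothing$ versus the time limit) both land at $|I^{\rho_{\max}}| \leq 300r$.
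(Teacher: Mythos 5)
Your proof is correct and follows essentially the same route as the paper: start from $|I^R| \leq \delta n$, drive $|I^\rho|$ below $300r$ within $\hat{\rho}$ rounds via \cref{claim:geometric-decrease}, and keep it there with \cref{claim:stability}. The extra care you devote to natural termination ($Q = \varnothing$) is handled in the paper by the convention, stated before \cref{claim:time-limit}, of pretending that rounds with an empty queue continue to execute (doing nothing); your observation that clean pure buckets must exist while $|I^\rho| \geq 300r$, so that $Q \neq \varnothing$, is precisely the detail needed to reconcile that convention with \cref{claim:geometric-decrease}, which the paper leaves implicit.
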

\begin{proof}
	We have $|I^R| < \delta n$ by \cref{eq:early-peel}. In subsequent rounds $\rho  \in  \{R+1,R+2,\dots \}$, the size of $I^\rho $ decreases by at least a factor of $\frac{19}{20}$ until being at most $300r$ by \cref{claim:geometric-decrease}. By choice of $\hat{\rho }$ we have $I^{\rho } \leq  300r$ for some $\rho  \leq  \hat{\rho }$. By \cref{claim:stability} $|I^{\rho }|$ will subsequently not rise above $300r$ for larger values of $\rho $.
\end{proof}
By the last claim and \cref{eq:no-small-core-eq} we have
\[|S^{\rho _{\max}}_0| \leq  S[|I^{\rho _{\max}}|] \leq  \tfrac{3}{5}\cdot (\max(5r,|I^{\rho _{\max}}|)) \leq  \tfrac{3}{5}\cdot 300r = 180r.\]
The set $S^{\rho _{\max}}$ of all keys after round $\rho _{\max}$ might additionally include up to $|F| \leq  r$ foreign keys. Since an additional round might be started but cut short due to the time limit, another $4r$ keys might be added due to anomalous decoding steps. Overall the set $S \triangle  \Sdec$ of keys returned in the end has size at most $180r+r+4r = 185r$. This concludes the proof.

\section{Set reconciliation with error-correcting codes}
\label{sec:err_correct}
The set reconciliation problem can be modeled by encoding a set as a binary string of length $U$ which is then sent through a channel inflicting up to $D$ errors (bitflips).
Recovering the errors is then equivalent to reporting the keys inserted to or deleted from the set. This reduction immediately implies that the problem can be solved by an appropriate application of error-correcting codes, in particular a linear code such as BCH, Reed-Solomon or expander code (see  \cite{karpovsky2003data}). 
This reduction, in turn, highlights the relationship of set reconciliation to the problem of \textit{sparse recovery}: reconstructing a sparse vector from a set of linear measurements (see e.g. \cite{cheraghchi2019simple}). 

We will use a BCH code over the field $GF(2^w)$ with  $U=2^w-1$. It is known (see e.g. \cite{MacWilliamsSloane}) that if up to $D$ errors have to be corrected, the binary parity-check matrix $\mathcal{H}$ for BCH has dimensions $Dw\times U$. Let each set $S$ be conceptually represented by a binary vector of size $U$. Denote $\mathcal{H}(S)$ the image of $S$ by $\mathcal{H}$. Then for two sets $S,T$, $\mathcal{H}(S) \oplus \mathcal{H}(T)=\mathcal{H}(S\triangle T)$. That is, if $|S\triangle T|\leq D$, then $\mathcal{H}(S\triangle T)$, called \textit{syndrome}, encodes the differences between $S$ and $T$ and can be reconstructed by a syndrome decoding algorithm. This implies that $\mathcal{H}(S)$ can be defined as a sketch of $S$ of $Dw$ bits so that the differences between two sets can be recovered from the XOR of their sketches, provided that there are at most $D$ of them. 

From the algorithmic viewpoint, inserting a key amounts to XORing the sketch with the corresponding column of $\mathcal{H}$ composed of $D$ blocks of $w$ bits (elements of $GF(2^w)$). Computing each block 
amounts to 
$D$ multiplications in $GF(2^w)$\footnote{We refer to \cite{DBLP:journals/siamcomp/DodisORS08} for details on how keys of $\mathcal{U}$ are represented as elements of $GF(2^w)$}. \gkk{A multiplication in $GF(2^w)$ can be done with $\mathcal{O}(w\log w)$ bit operations \cite{DBLP:journals/jacm/HarveyH22}, but only with $\mathcal{O}(w)$ operations in the RAM model (see Appendix). 
Thus, an insertion of a key takes time $\mathcal{O}(Dw)=\mathcal{O}(D\log U)$. }

Decoding is a more complex operation. The efficient syndrome decoding algorithm of \cite{DBLP:journals/siamcomp/DodisORS08} requires $\mathcal{O}(D^2 w)$ multiplications in $GF(2^w)$ resulting in \gkk{$\mathcal{O}(D^2 w^2) = \mathcal{O}(D^2 \log^2 U)$} time for decoding. Note that if the number $d<D$ of errors is known, then by properties of syndromes, it is sufficient to decode the first $dw$ bits of the sketch, and the decoding complexity becomes \gkk{$\mathcal{O}(d^2 \log^2 U)$}. 

Denote $\mathtt{toggle\_BCH}(x,C)$ and  $\mathtt{decode\_BCH}(C)$ the operation of inserting/deleting a key $x$ to/from a BCH sketch $C$ and decoding a BCH sketch $C$, respectively.  Complexity of BCH sketches is summarized in the following theorem. 
\begin{theorem}
	\label{thm:BCH_sketch}
	Consider a set $S$ containing no more than $D$ keys from $U$. Let $S$ be stored in a BCH sketch $C$ of $D\log (U+1)$ bits of space. 
	Then
	\begin{itemize}
		\item $\mathtt{toggle\_BCH}(x,C)$ takes \gkk{$\mathcal{O}(D\log U)$} time,
		\item $\mathtt{decode\_BCH}(C)$ decodes keys of $S$ with no error in \gkk{$\mathcal{O}(D^2 \log^2 U)$} time.
	\end{itemize}
\end{theorem}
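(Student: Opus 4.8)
The plan is to assemble the observations already established in the preceding discussion into the three claimed bounds — space, toggle time, and decode time — and to verify correctness of decoding separately. The statement is essentially a summary, so most of the work is bookkeeping, with one genuinely delicate ingredient deferred to the Appendix.

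\textbf{Space.} I would first fix $w = \log(U+1)$ so that $U = 2^w-1$, and recall that the binary parity-check matrix $\mathcal{H}$ of a $D$-error-correcting primitive BCH code has dimensions $Dw \times U$ \cite{MacWilliamsSloane}. The sketch $C = \mathcal{H}(S)$ is therefore a binary vector of length $Dw = D\log(U+1)$, which is exactly the claimed space.

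\textbf{Toggle.} Identify a key $x$ with its field element $\gamma_x \in GF(2^w)$ under the representation of \cite{DBLP:journals/siamcomp/DodisORS08}. The column of $\mathcal{H}$ indexed by $x$ consists of the $D$ blocks $(\gamma_x, \gamma_x^3, \dots, \gamma_x^{2D-1})$, each an element of $GF(2^w)$. These odd powers can be generated incrementally: compute $\gamma_x^2$ once and then obtain $\gamma_x^{2\ell+1} = \gamma_x^{2\ell-1}\cdot\gamma_x^2$ for $\ell = 1,\dots,D-1$, for a total of $\mathcal{O}(D)$ field multiplications. Since $\mathtt{toggle\_BCH}(x,C)$ XORs this column into $C$ and XOR is self-inverse, a single procedure handles both insertion and deletion. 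Using the $\mathcal{O}(w)$-time RAM multiplication established in the Appendix, the column costs $\mathcal{O}(Dw)$ to compute and $\mathcal{O}(Dw)$ to XOR, giving toggle time $\mathcal{O}(Dw) = \mathcal{O}(D\log U)$.

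\textbf{Decode.} Apply the syndrome decoder of \cite{DBLP:journals/siamcomp/DodisORS08} to $C$. Correctness follows from the error-correcting guarantee of BCH: since $|S| \leq D$, the characteristic vector of $S$ differs from the all-zero word in at most $D$ positions, so the syndrome $\mathcal{H}(S)$ determines $S$ uniquely and the decoder recovers it without error. For the running time, that decoder performs $\mathcal{O}(D^2 w)$ field multiplications; charging $\mathcal{O}(w)$ per multiplication yields $\mathcal{O}(D^2 w^2) = \mathcal{O}(D^2\log^2 U)$.

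\textbf{Main obstacle.} The only nonroutine ingredient is the per-multiplication cost in the RAM model: the standard bound over $GF(2^w)$ is $\mathcal{O}(w\log w)$ bit operations \cite{DBLP:journals/jacm/HarveyH22}, whereas the stated time bounds require $\mathcal{O}(w)$, which is precisely what the Appendix's RAM-model multiplication routine must supply. Beyond that, I would take care that the encoding of universe elements as field elements in \cite{DBLP:journals/siamcomp/DodisORS08} is compatible with repeated toggling, and that the quoted decoder complexity is the worst case for $D$ errors, so that the bound holds uniformly in the actual number of differences.
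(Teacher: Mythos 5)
Your proposal is correct and follows essentially the same route as the paper: the space bound from the $Dw\times U$ parity-check matrix, toggle via $\mathcal{O}(D)$ incremental field multiplications at $\mathcal{O}(w)$ RAM time each (the Appendix's multiplication routine), and decoding via the syndrome decoder of Dodis et al.\ with $\mathcal{O}(D^2w)$ multiplications, giving $\mathcal{O}(D^2\log^2 U)$. If anything, your explicit incremental computation of the odd powers $\gamma_x,\gamma_x^3,\dots,\gamma_x^{2D-1}$ cleans up the paper's looser phrasing (``computing each block amounts to $D$ multiplications''), which read literally would suggest $D^2$ multiplications per insertion.
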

Reconciliation of two sets is done by simply XORing their sketches $C_1$ and $C_2$, which can be trivially sped up by packing bits in words. We denote this operation by $\mathtt{merge\_BCH}(C_1,C_2)$. A software implementation of a BCH sketch for set reconciliation is available \cite{minisketch}. 	

	\section{IBLT with stash}
	\label{sec:IBLT_with_stash}
	
	Our goal is to enhance the IBLT in order to solve set reconciliation with a much stronger success guarantee than that provided by IBLT alone (\cite{HPW22}, Table~\ref{table:1}) while keeping small space and fast decoding time provided by the IBLT approach. Given an upper bound $D$ on the size of set differences, our sketch $\mathcal{S}$ consists of three components \gkk{depending on} a parameter $r$: 
	\begin{itemize}
		\item an IBLT $\mathcal{S}.A$ of size  $n=(c_3+\varepsilon)D$ 
		with three hash functions, as defined in Section~\ref{sec:def_IBLT},
		\item a control checksum $\mathcal{S}.H$ of $r$ bits, 
		\item a \textit{stash} data structure $\mathcal{S}.C$ defined as 
		a BCH syndrome of $r\log U$ bits (see Section~\ref{sec:err_correct}). 
	\end{itemize}
	The control checksum $\mathcal{S}.H$ is used to check if the decoding of the main IBLT $\mathcal{S}.A$ succeeded. 
	Before the decoding, $\mathcal{S}.H=\bigoplus_{x\in S}h(x)$ where $h:\mathcal{U}\rightarrow 2^r$ is a random hash function and $S$ is the currently stored key set. 

	The stash $\mathcal{S}.C$ is configured so that it can decode up to $r$ keys with no error, as described in Section~\ref{sec:err_correct}. 
	%
	The entire sketch takes $(c_3+\varepsilon)D\log U+r(1+\log U)=(c_3+\varepsilon)D\log U+o(D)$ bits. 
	
		\begin{figure}
		\begin{minipage}{0.47\textwidth}
			\begin{algorithm}[H]
				\algo{\texttt{insert}($x,\mathcal{S}$)}{
					$\mathtt{toggle}(x)$ in $\mathcal{S}.A$ \;
					$\mathcal{S}.H\gets \mathcal{S}.H\oplus h(x)$\;
					$\mathtt{toggle\_BCH}(x,\mathcal{S}.C)$ 
				}
			\end{algorithm}
		\begin{algorithm}[H]
			\algo{\texttt{diff}($\mathcal{S}_1,\mathcal{S}_2$)}{
				$\hat{\mathcal{S}}.A\gets \mathtt{merge}(\mathcal{S}_1.A,\mathcal{S}_2.A)$ \;
				$\hat{\mathcal{S}}.H\gets \mathcal{S}_1.H\oplus \mathcal{S}_2.H$\;
				$\hat{\mathcal{S}}.C\gets \mathtt{merge\_BCH}(\mathcal{S}_1.C, \mathcal{S}_2.C)$ \;
				\Return{$\hat{\mathcal{S}}$}
				}
		\end{algorithm}
		\end{minipage}
	   \begin{minipage}{0.53\textwidth}
	   	\begin{algorithm}[H]
	   		\algo{\texttt{report}($\mathcal{S}$)}{
	   			$\Sdec\gets \mathtt{decode}(\mathcal{S}.A)$\;
	\For{$x \in  \Sdec$}{
		$\mathcal{S}.H\gets \mathcal{S}.H\oplus h(x)$ 
	}
	\If{$\mathcal{S}.H\neq 0$}{
	   			\For{$x \in  \Sdec$}{
	   				$\mathtt{toggle\_BCH}(x,\mathcal{S}.C)$ \tcp{delete $x$ from $\mathcal{S}.C$} 
	   			}
	   			$\Sdec'\gets \mathtt{decode\_BCH}(\mathcal{S}.C)$
	   		}
		\Return{$\Sdec\triangle \Sdec'$}
	   		}
	   	\end{algorithm}
	   \end{minipage}
		\caption{Sketch operations. $\mathtt{insert}(x,\mathcal{S})$ inserts key $x$ to sketch $\mathcal{S}$, $\mathtt{diff}(\mathcal{S}_1,\mathcal{S}_2)$} computes the difference of sketches $\mathcal{S}_1$ and $\mathcal{S}_2$, $\mathtt{report}(\mathcal{S})$ reports keys stored in $\mathcal{S}$
		\label{fig:sketch}
	\end{figure}
	
		Figure~\ref{fig:sketch} shows Algorithms for inserting a key to a sketch, computing a sketch difference, and reporting keys. 
		
	
	A run of \texttt{report}($\mathcal{S}$) (Figure~\ref{fig:sketch}) can follow \gkk{two} scenarios. If the number of keys in the sketched set is no more than $D$, the main IBLT recovers those keys with probability $1-\tilde{O}(1/D)$ 
	(Theorem~\ref{thm:main-inner}). With probability $\tilde{O}(1/D)$, however, the recovery may fail by ``getting stuck'', by resulting in a ``falsely empty'' IBLT (see \cite{HPW22}), or by being aborted after $2n$ steps. Checking the checksum ensures that an incorrect decoding will be recognized except with probability $\mathcal{O}(1/2^r)$. By Theorem~\ref{thm:main-inner}, with probability $2^{-\Omega(r)}$, the output $\Sdec$ differs from $S$ by no more than $r$ missing or foreign keys. 
	The algorithm resorts then to the stash in order to correct the output, i.e.\ to recover the missing/superfluous keys, under assumption that there are at most $r$ of them. In this case, the stash reports those keys with no error. 
		
	The execution time of \texttt{report} depends on whether the correction step is triggered or not. If the decoding of the main IBLT is deemed to be successful by the checksum, the time taken will be $\mathcal{O}(D(1+\frac{r}{w}))$, where $w$ is the wordsize. \gkk{Here we assume that updating the checksum is done by bit packing and takes $\mathcal{O}(\frac{r}{w})$ time.} Otherwise the algorithm will spend additional \gkk{$\mathcal{O}(r\log U)$} time on performing $\mathtt{toggle\_BCH}$ on each of $\mathcal{O}(D)$ key in $\Sdec$, followed by the decoding taking additional \gkk{$\mathcal{O}(r^2\log^2 U)$} time. Since the stash is activated with probability $\tilde{O}(1/D)$, the expected time is \gkk{$\mathcal{O}(D(1+\frac{r}{w})+r\log U+\frac{r^2 \log^2 U}{D})$}. 
	Assuming 
		\gkk{$r=\min\left(\mathcal{O}(D/\log^2 U),\mathcal{O}(\log U)\right)$}, the time of $\mathtt{report}$ is $\mathcal{O}(D)$.
	
	We summarize the properties of \texttt{report} in the following theorem. 
	\begin{theorem}
		Let $\mathcal{S}$ be a sketch built for a key set $S$ and \gkk{$r=\min\left(\mathcal{O}(D/\log^2 U),\mathcal{O}(\log U)\right)$}. Then
		\begin{itemize}
			\item $\mathcal{S}$ takes $(c_3+\varepsilon)D\log U + r(1+\log U)=(c_3+\varepsilon)D\log U+o(D)$ bits,
			\item inserting an element to $\mathcal{S}$ takes time \gkk{$\mathcal{O}(r\log U)$},
			\item when $|S|\leq D$, $\mathtt{report}(\mathcal{S})$ correctly recovers $S$ with probability $1-2^{-\Omega(r)}$, 
			\item $\mathtt{report}(\mathcal{S})$ takes 
			$\mathcal{O}(D)$ expected time. 
		\end{itemize}
		\label{thm:main}
	\end{theorem}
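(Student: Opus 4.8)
\subparagraph{Proof proposal.}
The plan is to treat the statement as an assembly of the three already-analysed components and verify its four claims in turn. The space and insertion bounds are immediate from the construction. The sketch is the concatenation of the IBLT $\mathcal{S}.A$ (with $n=(c_3+\varepsilon)D$ buckets of $\log U$ bits), the $r$-bit checksum $\mathcal{S}.H$, and the BCH syndrome $\mathcal{S}.C$ of $r\log U$ bits, for a total of $(c_3+\varepsilon)D\log U + r(1+\log U)$ bits; substituting $r=\mathcal{O}(D/\log^2 U)$ makes the second summand $\mathcal{O}(D/\log U)=o(D)$. For insertion, $\mathtt{toggle}$ on $\mathcal{S}.A$ is $\mathcal{O}(1)$ word operations, updating $\mathcal{S}.H$ costs $\mathcal{O}(\lceil r/w\rceil)$, and the dominant cost is $\mathtt{toggle\_BCH}(x,\mathcal{S}.C)$ on a syndrome of $r$ blocks, which by Theorem~\ref{thm:BCH_sketch} scaled from $D$ to $r$ costs $\mathcal{O}(r\log U)$.

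The heart of the argument is correctness. Since $|S|\le D$ gives $n=(c_3+\varepsilon)D \ge (c_3+\varepsilon)|S|$, Theorem~\ref{thm:main-inner} applies, and I would condition on the event $\mathcal{E}$ that $|\Sdec \triangle S| \le r$, which fails with probability $2^{-\Omega(r)}$ by part (ii). Under $\mathcal{E}$ I distinguish $\Sdec = S$ from $\Sdec \ne S$. After the loop that XORs $h(x)$ for every $x\in\Sdec$ into $\mathcal{S}.H$, the checksum holds $\bigoplus_{x\in S\triangle\Sdec}h(x)$. If $\Sdec = S$ this is $0$, the stash branch is skipped, $\Sdec'=\varnothing$, and the algorithm returns $\Sdec=S$. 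If $\Sdec\ne S$, then $S\triangle\Sdec$ is a nonempty set of size at most $r$; because $h$ is an independent random hash into $2^r$, this XOR is uniform and hence nonzero except with probability $2^{-r}$. On that further event the stash branch triggers, and toggling every $x\in\Sdec$ out of $\mathcal{S}.C$ turns the stored syndrome of $S$ into the syndrome of $S\triangle\Sdec$; since $|S\triangle\Sdec|\le r$ this is decoded exactly to $\Sdec'=S\triangle\Sdec$, so $\mathtt{report}$ returns $\Sdec\triangle\Sdec'=S$. A union bound over the two failure sources yields total failure probability $2^{-\Omega(r)}+2^{-r}=2^{-\Omega(r)}$.

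Finally I would bound the expected running time by a weighted case analysis. The always-incurred cost is $\mathtt{decode}=\mathcal{O}(n)=\mathcal{O}(D)$ plus the $|\Sdec|=\mathcal{O}(D)$ checksum updates, giving $\mathcal{O}(D(1+r/w))$. The stash branch costs $\mathcal{O}(D)$ calls to $\mathtt{toggle\_BCH}$ at $\mathcal{O}(r\log U)$ each plus one $\mathtt{decode\_BCH}$ at $\mathcal{O}(r^2\log^2U)$, but it is entered only when the main decode is deemed to have failed, an event of probability $\tilde{\mathcal{O}}(1/D)$, so its expected contribution is $\tilde{\mathcal{O}}(r\log U + r^2\log^2U/D)$. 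Summing and substituting $r=\min(\mathcal{O}(D/\log^2U),\mathcal{O}(\log U))$ collapses every term to $\mathcal{O}(D)$: with $w\ge\log U$ we get $r/w=\mathcal{O}(1)$; with $r=\mathcal{O}(D/\log^2U)$ we get $r\log U=\mathcal{O}(D/\log U)=o(D)$ and, writing $r^2\log^2U/D=r\cdot(r\log^2U/D)\le r=\mathcal{O}(\log U)=\mathcal{O}(D)$.

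The step I expect to be most delicate is the correctness case analysis: making rigorous that toggling the possibly wrong decoded set $\Sdec$ out of the stash reduces the task to exact bounded-weight syndrome decoding of $S\triangle\Sdec$, and that the checksum's false-positive probability is genuinely $2^{-r}$ because the independent hash $h$ is uncorrelated with the IBLT randomness $h_1,h_2,h_3$ that produced $\Sdec$. The remaining bookkeeping (space, insertion, and the parameter balancing in the time bound) is routine once the probabilistic guarantee of Theorem~\ref{thm:main-inner}(ii) is taken as given; indeed all the substantive randomness analysis was front-loaded into that theorem.
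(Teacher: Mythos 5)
Your proposal is correct and takes essentially the same route as the paper's own (informal) justification: invoke Theorem~\ref{thm:main-inner}(ii) to bound $|S\triangle \Sdec|\le r$ except with probability $2^{-\Omega(r)}$, bound the checksum false-positive probability by $2^{-r}$ using the independence of $h$ from $h_1,h_2,h_3$, use linearity of the BCH syndrome plus Theorem~\ref{thm:BCH_sketch} for exact stash recovery, and bound the expected time by $\mathcal{O}(D(1+r/w)+r\log U+r^2\log^2 U/D)$ with the stated choice of $r$. The only blemish is the final step $r=\mathcal{O}(\log U)=\mathcal{O}(D)$, which is unjustified when $D\ll \log U$; it is harmless since $r\le \mathcal{O}(D/\log^2 U)=\mathcal{O}(D)$ gives the same conclusion directly.
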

	
	Note that if $r=\omega(\log D)$, the failure probability for $\mathtt{report}$ to recover a set of up to $D$ keys is $o(1/D)$, as opposed to $\tilde{O}(1/D)$ for recovery without stash \cite{HPW22}. 
	
	For two sets $S$ and $T$ represented by their respective sketches $\mathcal{S}_S$ and $\mathcal{S}_T$, \texttt{diff}($\mathcal{S}_S,\mathcal{S}_T$) is the sketch of symmetric difference $S\Delta T$.  Therefore, Theorem~\ref{thm:main} applies directly to the set reconciliation problem. In particular, if $|S\Delta T|\leq D$, then $S\Delta T$ can be recovered with guarantees stated in Theorem~\ref{thm:main}.

	\section{Distinguishing \texorpdfstring{$S\setminus T$ and $T\setminus S$}{S \textbackslash T and T \textbackslash S}}
	\label{sec:distinguishing}
	
	
	The algorithm we presented computes symmetric difference $S\Delta T$ but is not capable of distinguishing elements of $S$ and $T$ in the output. However, the latter is desirable for many applications. Here we outline how our sketch \gk{and set reconciliation protocol (Figure~\ref{fig:sketch})} can be modified in order to make this possible. \gk{The general idea is to assign a ``sign'' to keys depending on whether they come from $S$ or $T$ and to keep track of it in a consistent manner. Formal proofs are left to the full version of the paper.}
	
	\subparagraph{Modified IBLT} 
	\gk{
	We modify our IBLT (Section~\ref{sec:def_IBLT}) as follows. IBLT entries will now be \textit{ternary} strings of $\{0,1,2\}^\nu$ seen as elements of group $\mathbb{Z}_3^\nu$ (additive group of $GF(3^\nu))$. The first \textit{trit} encodes a \textit{sign} and the other $\nu-1$ trits encode a keysum. We use an appropriate encoding of $\mathcal{U}$ into strings $\{0,1,2\}^{\nu-1}$ and denote by $\widetilde{x}$ the encoding of $x$. 
	
	Toggling a key $x$ is replaced by two operations: insertion and deletion. Inserting $x$ into an IBLT is done by adding (in $\mathbb{Z}_3^\nu$) $1\widetilde{x}$ to each of the three entries at $h(x)$, and deleting $x$ is done by subtracting $1\widetilde{x}$ from each of the three entries at $h(x)$ or, equivalently, adding $2(-\widetilde{x})\equiv (-1)(-\widetilde{x})$ where $-\widetilde{x}$ is the inverse of $\widetilde{x}$ in $\mathbb{Z}_3^{\nu-1}$. Observe that insertion and deletion of $x$ cancel each other out, and that inserting twice the same key $x$ is equivalent to deleting $x$ and vice versa. 
	
	IBLT operations (Figure~\ref{fig:algorithms}) are modified as follows. 
	Operation \texttt{looksPure}  checks if the first trit equals 
	$1$ or $2\equiv -1$. 
	If it is $2$, the key assumed to sit in this entry is the inverse of the retrieved key value. $\mathtt{toggle}(x)$ deletes or inserts $x$ from/to each of the three entries at $h(x)$ depending on whether $x$ has been identified as \textit{pure} with the first trit $1$ or $2$, respectively. Merging two IBLTs $\mathtt{merge}(A,A')$ now becomes non-commutative and is done by subtracting (in $\mathbb{Z}_3^\nu$) $A'$ from $A$ entry-wise. 
	

	Consider a sketch resulting from \texttt{merge}($A,A'$). When reporting a set difference (modified Algorithm \texttt{report} in Figure~\ref{fig:sketch}) the algorithm now outputs \textit{signed elements}, where elements annotated with $1$ (i.e.\ those with the first trit equal to $1$ at the moment of toggling) are interpreted as belonging to the first set, while those annotated with $2$ are interpreted as belonging to the second. The symmetric difference operation $\Delta $ in both \texttt{decode} and \texttt{report} is also modified as follows. When computing $S_1\Delta  S_2$, an element is canceled only if it occurs in $S_1$ and $S_2$ with opposite signs. If an element occurs in $S_1$ and $S_2$ with the same sign, it is reported in the resulting set with the opposite sign. 
	
	Observe that the modified structure does not prevent anomalies to occur but only makes them less likely: an anomaly can still occur if the first trit is $1$ or $2$	but the entry actually contains more than one key. On the other hand,  our analysis of Section~\ref{sec:main_iblt} carries over to the signed case. The difference in $S_{dec}$ compared to the original decoding is due to repeated reportings of the same key with different sign combinations, which does not affect the proof ideas of the main Theorem~\ref{thm:main-inner}. 
}
	
	\subparagraph{Modified control hashsum}  Here the hashsum uses arithmetic summation instead of XOR. We assume that overflow is implemented in a consistent way, that is $(x+y)-y=x$ even if $x+y$ results in overflow. Toggling $x$ entails subtracting $h(x)$ from the hashsum if $x$ is annotated with $1$, and adding if it is annotated $-1$. The hashsum of $\mathtt{diff}(\mathcal{S}_1,\mathcal{S}_2)$ is now defined as $\mathcal{S}_1.H-\mathcal{S}_2.H$. 
	
	\subparagraph{Modified BCH sketch} In order to deal with signed elements, we use BCH code over field $GF(3^{\nu-1})$. The BCH sketch of a set is still defined as the syndrome vector under assumption that each set element is encoded by value $1$ in the error vector. The \texttt{merge\_BCH} operation will now subtract the syndromes (in $GF(3)$) instead of XORing them. By linearity of the code, the ``positive'' and ``negative'' elements will correspond to positions with $1$ and $2\equiv -1$  respectively in the recovered error vector. 
	
	Given a BCH sketch resulting from the difference of sketches of input sets $S$ and $T$, algorithm \texttt{report} will subtract the syndrome corresponding to the set $\Sdec$ of signed elements decoded by the IBLT.  Naturally, positive and negative elements are encoded respectively by values $1$ and $2$ as well. Again, by linearity, the resulting syndrome will encode exactly the elements $\Sdec'$ so that $\Sdec\Delta \Sdec' = S\Delta T$, where positive (resp.\ negative) elements are those belonging to $S$ (resp.\ $T$) only. 
	
	Observe that, in general, $\Sdec$ will contain a subset of $S\Delta T$ as well as possibly some foreign keys, however our modified definition of $\Delta $ will ensure a correct recovery of the original set. As an example, assume $x\in S\setminus T$ and assume $x$ has been reported \gk{by the IBLT} as positive, followed by another reporting of $x$ (produced by an anomaly) as positive as well. From our definition of $\Delta $, $x$ will become negative in $\Sdec$ and will be added (rather than subtracted) to the BCH sketch. This will result in reporting $x$ as negative again by the BCH sketch, and, by our definition of $\Delta $, will be eventually correctly reported as positive in the final output. 
	
	
	The syndrome of the BCH code over $GF(3^{\nu-1})$ consists of at most $2D$ elements of $GF(3^{\nu-1})$ i.e. at most $4D\log_3 U$ bits by a straightforward encoding. 
	Using ternary representation of keys and arithmetic operations in $GF(3^\nu)$ instead of $GF(2^w)$ introduces only a constant factor change in time complexities of both insertion and decoding \gkk{(see Appendix)}. In conclusion, time and robustness guarantees of Theorem~\ref{thm:main} remain valid.

	\section{Concluding remarks}
	
	\gk{
		The presented solution to set reconciliation uses asymptotically negligible additional space and additional time for decoding compared to the IBLT-only solution, but provides a drastic improvement in robustness. 
		The decoding time of our algorithm is small in expectation, however it becomes substantial in worst case. More precisely, when the BCH correction is activated, the decoding time becomes $O(D\,r\,\mathrm{polylog}(U))$ which can approach $O(D^2)$. We believe however this can be  overcome by implementing the stash with expander codes which have much smaller insertion and decoding times. We leave details for future work. 
		
		\gkk{Very recently, we learned about paper \cite{DBLP:journals/corr/abs-2306-07583} that proposes a modified construction of IBLT and applies hash functions of restricted independence, rather than assuming them fully random. As a result, the construction takes a smaller space (compared to the original IBLT with the same error guarantee) and requires less randomness for hash functions. However, space is measured in terms of the number of IBLT buckets rather than in bits, which leads to a much larger multiplicative constant compared to our result. The decoding time of \cite{DBLP:journals/corr/abs-2306-07583} appears to be larger than ours as well (not specified in the paper).}
		}
		
	\bibliography{sketching-biblio.bib}
	\appendix 
	\section*{Appendix}
	Below we describe how to perform multiplication in $GF(2^w)$ and $GF(3^w)$ in RAM model in time  $O(w)$. 
	
	\section{Multiplications in \texorpdfstring{$GF(2^w)$}{GF(2\^{}w)} in RAM model in \texorpdfstring{$O(w)$}{O(w)} time}
	The algorithm we describe is folklore, but since we did not find a reference we briefly sketch it here. The multiplication of two elements $x$ and $y$ proceeds in two phases. In the first phase, one computes the product $z=x\cdot y$, where $\cdot$ refers to a carry-less multiplication of the binary representation of $x$ and $y$. Thus, $z$ will be a bitstring of length $2w-1$. This phase can be easily implemented in $O(w)$ time using $O(w)$ XOR and shift operation\footnote{Note that in practice, most modern processors natively support carry-less multiplication in constant time using much less complex hardware implementation than standard multiplication, but is not part of the standard RAM operations.}. The second phase is the polynomial euclidean division by an irreducible polynomial of degree $w$ and can also be easily implemented in $O(w)$ time using $O(w)$ XOR and shift operations. The remainder of the division will be the final result of the \gkk{multiplication}. 

	\section{Multiplications in \texorpdfstring{$GF(3^w)$}{GF(3\^{}w)} in RAM model  in \texorpdfstring{$O(w)$}{O(w)} time}
	Likewise, multiplication in $GF(3^w)$ can be implemented in $O(w)$ time. For that we can use a representation of an element using $2w$ bits, where each element from the base field $GF(3)$ is represented using $2$ bits. Note that any integer in the range $[0..3^{w}]$ can be converted to this representation in $O(w)$ time, by doing successive euclidean divisions by $3$. Notice that a division by $3$ can be simulated using multiplications and other elementary operations. For a technical reason that will become clear later, we will instead use $4$ bits to represent each element of the base field, resulting in a representation that uses $4w$ bits. Let $G$ be the function that transforms integers of $[0..3^{w}-1]$ into binary representation of $4w$ bits. Let $G^{-1}$ be the inverse of $G$.
	Now assume that given two elements $x$ and $y$ represented as integers, we want to compute $z=x\cdot y$. We will first convert $x$ and $y$ into their binary representations $x'=G(x)$ and $y'=G(y)$, then compute the product $z'=x'\cdot y'$, and finally convert  $z'$ back into an integer $z$ using function $G^{-1}$. 
	We now describe how we do multiplication in the $4w$-bit representation. For that, we successively extract the quadbits ($4$-bit units) of $x'$, multiply each of them with $y'$, and aggregate all of them. Finally, the remainder of euclidean division by an irreducible polynomial can also be done in time $O(w)$ using $O(w)$ operations. 
	Figure~\ref{fig:mult_GF3_algorithm} shows the pseudocode of all steps of the algorithm. 
	
	\begin{figure}
		\begin{minipage}{0.55\textwidth}
			\begin{algorithm}[H]
				\algo{$G(x)$}{
					$x'\leftarrow  0$\;
					$j \leftarrow  0$\;
					\For{$i \leftarrow 1$ \KwTo $w$}{
						$x'\leftarrow  x'  + ((x \mod 3) \ll j)$\;
						$x \leftarrow  x / 3$\;
						$j \leftarrow  j + 4$\;
					}
					\Return $x'$
					
				}
			\end{algorithm}
			\begin{algorithm}[H]
				\algo{$G^{-1}(x')$}{
					$x \leftarrow  0$\;
					$j \leftarrow  4\cdot (w-1)$\;
					\For{$i \leftarrow 1$ \KwTo $w$}{
						$x \leftarrow  x\cdot 3  + ((x' \gg j)\wedge  (0011)_b)$\;
						$j \leftarrow  j - 4$\;
					}
					\Return $x$
					
				}
			\end{algorithm}
			\begin{algorithm}[H]
				\algo{\texttt{MOD3}($x$)}{
					//reduce each quadbit modulo 3\; 
					//works only if quadbit in range [0..5]\; 
						$A \leftarrow  (x+1) \wedge  (0100)_b^w$\;
						$x \leftarrow  x - ((A\gg 2) + (A\gg 1))$\;
						\Return $x$
					}
				\end{algorithm}
			\end{minipage}\hspace{-3em}
			\begin{minipage}{0.55\textwidth}
				\begin{algorithm}[H]
					\algo{$\mathtt{reduce}(x,P)$}{
						//$P$ : irreducible polynomial represented in quadbits\;
						$P' \leftarrow  \mathtt{MOD3}(2\cdot P)$\;
						$j \leftarrow  4\cdot (2\cdot w-1)$\;
						\For{i $\leftarrow $ 1 \KwTo $w$}{
							\If{$(x \gg j) > 0$}{
								$x \leftarrow  x + (0011)_b^w$\;
								\If{$(x \gg j) = (P \gg (w-1))$}{
									$x \leftarrow  x - P$\;
								}
								\Else{
									$x \leftarrow  x - P'$\;
								}
								$x \leftarrow  \mathtt{MOD3}(x)$\;
							}
							$j \leftarrow  j - 4$\;
						}
						\Return $x$
					}
				\end{algorithm}		
				\begin{algorithm}[H]
					\algo{$\mathtt{mult}(x',y')$}{
						$z' \leftarrow  0$\;
						$j \leftarrow  4\cdot (w-1)$\;
						\For{$i \leftarrow  1$ \KwTo $w$}{
							$t \leftarrow   (x' \gg j)\wedge  (0011)_b$\;
							$t \leftarrow  t \cdot y'$\;
							$t \leftarrow  \mathtt{MOD3}(t)$\; 
							$z' \leftarrow  (z'\ll 4)  + t$\;
							$z' \leftarrow  \mathtt{MOD3}(z')$\;
							$j \leftarrow  j - 4$\;
						}
						$z' \leftarrow  \mathtt{reduce}(z',P)$\;
						\Return $z'$
					}
				\end{algorithm}
				\begin{algorithm}[H]
					\algo{$\mathtt{MULT}(x,y)$}{
						$x' \leftarrow  G(x)$\;
						$y' \leftarrow  G(y)$\;
						$z' \leftarrow  \mathtt{mult}(x,y)$\;
						$z \leftarrow  G^{-1}(z)$\;
						\Return $z$				
						
					}
				\end{algorithm}
			\end{minipage}
			\caption{Multiplication algorithm $GF(3^w)$ modulo an irreuducible polynomial}
			\label{fig:mult_GF3_algorithm}
		\end{figure}
		
\end{document}